\def\polylog{\operatorname{polylog}}
\newcommand{\MPC}[0]{\ensuremath{\mathsf{MPC}}}
\newcommand{\twocycle}[0]{\textsc{2-Cycle}}
\newcommand{\AMPC}[0]{\ensuremath{\mathsf{AMPC}}}
\newcommand{\PRAM}[0]{\ensuremath{\mathsf{PRAM}}}
\newcommand{\EREW}[0]{\ensuremath{\mathsf{EREW}}}
\newcommand{\CREW}[0]{\ensuremath{\mathsf{CREW}}}
\newcommand{\MultiPRAM}[0]{MultiPrefix \ensuremath{\mathsf{PRAM}}}
\newcommand{\myparagraph}[1]{\vspace{0.2cm}\noindent {\bf #1}}
\newcommand{\increasedegree}[0]{\textsc{IncreaseDegrees}}
\newcommand{\listranking}[0]{\textsc{ListRanking}}
\newcommand{\shrink}[0]{\textsc{Shrink}}
\newcommand{\cycleconn}[0]{\textsc{CycleConnectivity}}
\newcommand{\connectivity}[0]{\textsc{Connectivity}}
\newcommand{\spanningforest}[0]{\textsc{SpanningForest}}
\newcommand{\rootforest}[0]{\textsc{RootForest}}
\newcommand{\biconnectivity}[0]{\textsc{BC-Labeling}}
\newcommand{\preordernumber}[0]{\textsc{PreorderNumber}}
\newcommand{\msfincreasedegree}[0]{\textsc{MSFIncreaseDegree}}
\newcommand{\msf}[0]{\textsc{MinimumSpanningForest}}
\DeclareMathOperator{\poly}{poly}
\newcommand{\E}[0]{\ensuremath{\mathbb{E}}}
\newtheorem{theorem}{Theorem}
\newtheorem{lemma}{Lemma}[section]
\newtheorem{proposition}[lemma]{Proposition}
\newtheorem{corollary}[lemma]{Corollary}
\newtheorem{claim}[lemma]{Claim}
\definecolor{mygreen}{RGB}{20,140,80}
\definecolor{linkcolor}{RGB}{100,0,0}
\definecolor{mylightgray}{RGB}{230,230,230}
\definecolor{verylightgray}{RGB}{245,245,245}
\newcommand{\etal}[0]{\textit{et al.}}
\algnewcommand{\IIf}[1]{\State\algorithmicif\ #1\ \algorithmicthen}
\algnewcommand{\EndIIf}{\unskip\ \algorithmicend\ \algorithmicif}
\newcounter{myalgctr}
\newtcolorbox{OuterBox}[1][]{%
    breakable,
    enhanced,
    frame hidden,
    interior hidden,
    left=-5pt,
    right=-5pt,
    top=-5pt,
    float=p,
    boxsep=0pt,
    arc=0pt
#1}%
\newtcolorbox{InnerBox}[1][]{%
    enforce breakable,
    enhanced,
    colback=gray,
    colframe=white,
#1}%
\newenvironment{tbox}{
\vspace{0.2cm}
\begin{tcolorbox}[
                  enhanced,
		 % float*=b,
%                  frame hidden,
%                  interior hidden,
                  boxsep=2pt,
                  left=1pt,
                  right=1pt,
                  top=4pt,
                  boxrule=1pt,
                  arc=0pt,
                  colback=white,
                  colframe=black,
	              breakable
                  ]%%
}{
\end{tcolorbox}
}
\newcommand{\tboxhrule}[0]{\vspace{0.1cm} {\color{black} \hrule} \vspace{0.2cm}}
\newenvironment{titledtbox}[1]{\begin{tbox}#1 \tboxhrule}{\end{tbox}}
\newenvironment{tboxalg}[1]{\refstepcounter{myalgctr}\begin{titledtbox}{\textbf{Algorithm \themyalgctr.} #1}}{\end{titledtbox}}
\begin{document}\sloppy

\title{Massively Parallel Computation via Remote  Memory Access}

\author{
  Soheil Behnezhad\\University of Maryland
  \and
  Laxman Dhulipala\\CMU
  \and
  Hossein Esfandiari\\Google Research
  \and
  Jakub Łącki\\Google Research
  \and
  Warren Schudy\\Google Research
  \and
  Vahab Mirrokni\\Google Research
}

\date{}

%\author{Soheil Behnezhad}
%\affiliation{%
%	\institution{University of Maryland}
%	\city{College Park}
%	\country{USA}
%}
%\email{soheil@cs.umd.edu}
%\author{Laxman	Dhulipala}
%\affiliation{%
%	\institution{CMU}
%	\city{Pittsburgh}
%	\country{USA}
%}
%\email{ldhulipa@andrew.cmu.edu}
%
%\author{Hossein	Esfandiari}
%\affiliation{%
%	\institution{Google}
%	\city{New York}
%	\country{USA}
%}
%\email{esfandiari@google.com}
%\author{Jakub Łącki}
%\affiliation{%
%	\institution{Google}
%	\city{New York}
%	\country{USA}
%}
%\email{jlacki@google.com}
%\author{Vahab Mirrokni}
%\affiliation{%
%	\institution{Google}
%	\city{New York}
%	\country{USA}
%}
%\email{mirrokni@google.com}
%\author{Warren	Schudy}
%\affiliation{%
%	\institution{Google}
%	\city{New York}
%	\country{USA}
%}
%\email{wschudy@google.com}
%
%\renewcommand{\shortauthors}{S. Behnezhad, L. Dhulipala, H. Esfandiari, J. Łącki, V. Mirrokni and W. Schudy}

\maketitle{}

\begin{abstract}
\setlength{\parskip}{0.4em}
We introduce the {\em Adaptive Massively Parallel Computation} (\AMPC{}) model, which is an extension of the {\em Massively Parallel Computation} (\MPC{}) model. At a high level, the \AMPC{} model strengthens the \MPC{} model by storing all messages sent within a round in a distributed data store. In the following round, all machines are provided with random read access to the data store, subject to the same constraints on the total amount of communication as in the \MPC{} model. Our model is inspired by the previous empirical studies of distributed graph algorithms~\cite{cc-beyond,nips17} using MapReduce and a distributed hash table service~\cite{bigtablepaper}.

This extension allows us to give new graph algorithms with much lower round complexities compared to the best known solutions in the \MPC{} model. In particular, in the \AMPC{} model we show how to solve maximal independent set in $O(1)$ rounds and connectivity/minimum spanning tree in $O(\log \log_{m/n} n)$ rounds both using $O(n^\delta)$ space per machine for constant $\delta < 1$. In the same memory regime for \MPC{}, the best known algorithms for these problems require $\poly\log n$ rounds. Our results imply that the \textsc{2-Cycle} conjecture, which is widely believed to hold in the \MPC{} model, does not hold in the \AMPC{} model.
\end{abstract}

\thispagestyle{empty}

\clearpage

\setcounter{page}{1}

\section{Introduction}\label{sec:intro}

The MapReduce computation framework was introduced in 2004 by Dean and Ghemawat~\cite{mapreduce}. It provides an easy to use platform for distributed computing, which handles synchronization and fault tolerance entirely on the framework side. Since its introduction MapReduce has inspired a number of other distributed computation frameworks, for example Spark~\cite{spark}, Hadoop~\cite{hadoop}, FlumeJava~\cite{flumejava}, Beam~\cite{beam}, Pregel~\cite{pregel}, and Giraph~\cite{giraph}.

These frameworks share many similarities, especially from a theoretical point of view. For this reason, the {\em Massively Parallel Computation} (\MPC{}) model~\cite{DBLP:journals/jacm/BeameKS17,DBLP:conf/soda/KarloffSV10,DBLP:conf/isaac/GoodrichSZ11} is widely accepted as the standard theoretical model of such systems.
%Interestingly, from theoretical point of view, all of these frameworks are very similar, as they are all captured by the Massively Parallel Computation (\MPC{}) model~\cite{DBLP:journals/jacm/BeameKS17,DBLP:conf/soda/KarloffSV10,DBLP:conf/isaac/GoodrichSZ11}.
%\klcomment{I don't know what the next sentence is supposed to mean}
%The ability for \MPC{} to capture a wide variety of real-world systems makes it an attractive and robust theoretical model to study.
Due to the generality of the \MPC{} model, it does not account for specific features used by emerging data-processing systems, such as RDMAs and RPCs whose use may result in simpler and faster algorithms.
%However, this also implies that the limitations of the \MPC{} model overshadow specific features of these frameworks that, in fact, turn out to be extremely important in many cases.
In particular, in the context of graph algorithms, the major shortcoming of the \MPC{} model is the fact that exploring the neighborhood of a vertex is costly. The widely believed \textsc{2-Cycle} conjecture~\cite{DBLP:conf/icml/YaroslavtsevV18, DBLP:conf/spaa/RoughgardenVW16, cc-contractions, DBLP:journals/corr/abs-1805-02974} states that distinguishing between a graph being a cycle of length $n$ from a graph consisting of two cycles of length $n/2$ requires $\Omega(\log n)$ rounds. Intuitively, finding even a single vertex at a distance $k$ from a given vertex seems to require $\Omega(\log k)$ rounds. This restriction turns out to be, perhaps, {\em the} most important bottleneck in designing efficient \MPC{} algorithms for graph problems when the space per machine is much smaller than $n$, the number of vertices. In practice, however, the local neighborhood of a vertex can be explored much more efficiently. In a broader context, a data stored on a remote machine can be read with only a few microsecond latency using hardware support for remote direct-memory access (RDMA) without requiring a synchronized round of communication.

In this work, we attempt to close this gap between theory and practice by introducing an extension of the \MPC{} model and exploring its capabilities.
%TODO: explored in the past in DHT papers

\subsection{Our contribution}
In this paper we introduce the \emph{Adaptive Massively Parallel Computation} model (\AMPC{}), which is an extension of the \MPC{} model. Roughly speaking, the \AMPC{} model extends the \MPC{} model by allowing machines to access a shared read-only memory {\em within a round}. We model this by assuming that all messages sent in a single round are written to a distributed data storage, which all machines can read from within the next round.

We demonstrate the usefulness of the new model by giving new algorithms for a number of fundamental graph problems. The corresponding round complexities are significantly lower than the complexities of the best known algorithms in the $\MPC{}$ model. Our results are summarized in Figure~\ref{tab:results}. We highlight that our results imply that the \textsc{2-Cycle} conjecture does not hold in the \AMPC{} model. Hence, under the conjecture, the \AMPC{} model is strictly stronger than \MPC{}.

\begin{figure*}[h]
  \centering
\begin{tabular}{l|c|c}
    \toprule
    \textbf{Problem} & \textbf{\AMPC{}} & \textbf{\MPC{}} \\
    \midrule
      Connectivity & $O(\log \log_{m/n} n)$ & $O(\log D \cdot \log \log_{m/n} n)$~\cite{andoniparallel}\\
%      \hline
      Minimum spanning tree & $O(\log \log_{m/n} n)$ & $O(\log n)$\\
%      \hline
      2-edge connectivity & $O(\log \log_{m/n} n)$ & $O(\log D \cdot \log \log_{m/n} n)$~\cite{andoniparallel}\\
%      \hline
      Maximal independent set & $O(1)$ & $\widetilde{O}(\sqrt{\log n})$~\cite{DBLP:conf/soda/GhaffariU19}\\
%      \hline
      %Approximate maximum matching & $O(\log \log n)$ & $O(\sqrt{\log n})$\\
      %\hline
      \textsc{2-Cycle} & $O(1)$ & $O(\log n)$\\
      Forest Connectivity & $O(1)$ & $O(\log D \cdot \log \log_{m/n} n)$~\cite{andoniparallel}\\
    \bottomrule
  \end{tabular}
\caption{\label{tab:results}Round complexities of our algorithms in the \AMPC{} model compared to the state-of-the-art \MPC{} algorithms. $D$ denotes the diameter of the input graph.
	We consider the setting where the space per machine is sublinear in the number of vertices, that is $S = O(n^{\epsilon})$ for constant $\epsilon < 1$.}
\end{figure*}

%\begin{figure*}
%\begin{tabular}{ccc}
%\textbf{Problem} & \textbf{\AMPC{}} & \textbf{\MPC{}}\\
%\hline
%Connectivity & $O(\log \log_{m/n} n)$ & $O(\log D \cdot \log \log_{m/n} n)$~\cite{andoniparallel}\\
%\hline
%Minimum spanning tree & $O(\log \log_{m/n} n)$ & $O(\log n)$\\
%\hline
%2-edge connectivity/biconnectivity & $O(\log \log_{m/n} n)$ & $O(\log \log_{m/n} n \log D)$~\cite{andoniparallel}\\
%\hline
%Maximal independent set & $O(1)$ & $O(\sqrt{\log n}\log \log n)$~\cite{DBLP:conf/soda/GhaffariU19}\\
%\hline
%%Approximate maximum matching & $O(\log \log n)$ & $O(\sqrt{\log n})$\\
%%\hline
%\textsc{2-Cycle} & $O(1)$ & $O(\log n)$\\
%\end{tabular}
%\caption{\label{tab:results}Round complexities of our graph algorithms in the \AMPC{} model, compared to the best known results in the \MPC{} model. $D$ denotes the diameter of the input graph.
%	We consider the setting where the space per machine is sublinear in the number of vertices of the graph, that is $S = O(n^{1-\epsilon})$.}
%\end{figure*}

%Finally, we validate some of the model assumptions by performing a preliminary empirical evaluation. The details are given in Section~\ref{sec:experiments}.

Our model is inspired by two previous papers that combine the use of distributed hash table service (DHT) and MapReduce~\cite{cc-beyond, nips17}.
The first paper~\cite{cc-beyond} considers the problem of finding connected components.
It shows that by using a DHT in addition to MapReduce, the bound on the number of rounds can be decreased from $O(\log^2 n)$ to $O(\log n \log \log n)$.
It also demonstrates empirically that using a DHT can significantly decrease the number of rounds and the running time of the algorithm.
The second paper~\cite{nips17} studies the distributed hierarchical clustering problem, and shows how the use of a DHT facilitates the affinity clustering algorithm to scale up to graphs with trillions of edges.

However, neither of these papers defined a formal model of computation in which their algorithms operate.
Our paper is thus the first attempt to model the use of a distributed hash table along with a MapReduce-like distributed processing framework.

\subsection{Organization of the Paper}
In Section~\ref{sec:model} we describe our model and discuss its properties.
In particular, we discuss various practical aspects of implementing our new model.
Then, in Section~\ref{sec:prelims} we introduce notation that is used in the following sections.
The remaining sections present our algorithms.
We start with a very simple algorithm for solving the \textsc{2-Cycle} problem in Section~\ref{sec:twocycle}.
This is followed by algorithms for maximal independent set (Section~\ref{sec:mis}), connectivity (Section~\ref{sec:connectivity}), minimum spanning forest (Section~\ref{sec:msf}) and forest connectivity (Section~\ref{sec:forest-connectivity}).

\section{The Adaptive Massively Parallel Computation (\AMPC{}) Model}\label{sec:model}
\newcommand{\dds}{\mathcal{D}}

In the \AMPC{} model, we have an input of size $N$\footnote{The input size $N$ should not be confused with the number of vertices in a graph, which we denote by $n$.}, which is processed by $P$ machines, each with space $S$. We denote by $T$ the total space of all machines, i.e., $T = S \cdot P$.
We assume that $S = \Theta(N^{1-\Omega(1)})$ and require that the total space is close to the input size, i.e., $T = O(N \polylog N)$.

In addition, in the \AMPC{} model there is collection of distributed data stores (DDS) that we denote by $\dds_0, \dds_1, \dds_2, \ldots$. For the description, it is convenient to assume that all DDS-es provide key-value store semantics, that they store a collection of key-value pairs: given a key a DDS returns the corresponding value. We require each key-value pair to have constant size (i.e. both key and value consist of a constant number of words). The input data is stored in $\dds_0$ and uses a set of keys known to all machines (e.g. consecutive integers).

If $k > 1$ key-value pairs stored in the DDS have the same key $x$, the individual values can be accessed by querying for keys $(x, 1), \ldots, (x, k)$. Note that the indices from $1$ to $k$ are assigned arbitrarily. Querying for a key that does not occur in the DDS, results in an empty response.

The computation consists of \emph{rounds}. In the i-th round, each machine can read data from $\dds_{i-1}$ and write to $\dds_i$. Within a round, each machine can make up to $O(S)$ reads (henceforth called \emph{queries}) and $O(S)$ \emph{writes}, and perform arbitrary computation\footnote{The algorithms described in this paper perform near-linear computation in each step (w.r.t. their available space). However, we do not put a bound on the amount of computation on a machine, as this is a usual assumption of the \MPC{} model.}. Each query and write refers to querying for/writing a single key-value pair.
The amount of \emph{communication} that a machine performs per round is equal to the total number of queries and writes.

The key property of the model is that the queries that a machine makes in a round may depend on the results of the previous queries it made in the same round. This is why we call our model \emph{adaptive}.
In particular, if $g$ is a function from $X$ to $X$ and for each $x \in X$, $\dds_{i-1}$ stores a key-value pair $(x, g(x))$, then in round $i$ a machine can compute $g^k(y)$ in a single round, provided that $k = O(S)$.

It is easy to simulate every \MPC{} algorithm in the \AMPC{} model. Namely, in the \AMPC{} model, instead of sending a message to machine with id $x$, we can write a key-value pair keyed by $x$ to the DDS. In the following round, each machine reads all key-value pairs keyed by its id. The limits on the number of queries and writes in the \AMPC{} model are direct counterparts of the communication limits of the \MPC{} model.

Due to known simulations of \PRAM{} algorithms by \MPC{}~\cite{DBLP:conf/soda/KarloffSV10, DBLP:conf/isaac/GoodrichSZ11}, the \AMPC{} model can also simulate existing \PRAM{} algorithms from the \EREW{}, \CREW{}, \CREW{} models~\cite{Blelloch:2010:PA:1882723.1882748}, as well as stronger \PRAM{} variants such as the \MultiPRAM{}~\cite{Ranade:1989:FPC:916125}. The simulations can be done using $O(1)$ rounds per \PRAM{} step, and total space proportional to the number of processors.

\subsection{Realism}\label{sec:realism}
Let us now discuss the practical aspects of building a framework implementing the \AMPC{} model.

\myparagraph{Sequential queries.}
The \AMPC{} model allows each machine to issue $O(S)$ queries within each round. Each query may depend on the results of the previous queries, which may cause the queries to be issued sequentially. It is natural to wonder how this aspect of the model may affect the parallelism. We believe that this assumption is realistic for the following reasons.

First, recent work on remote direct memory-access (RDMA) has shown
that large-scale RDMA systems can be implemented efficiently, with
many commercial systems, hardware and networking support for RDMA
such as Infiniband~\cite{infiniband}, Omnipath~\cite{omnipath,
birrittella15omnipath} and RoCE~\cite{roce} being widely available. RDMA has
been used in recent work on key-value storage
systems~\cite{mitchell2013rdmakvstores, kalia14kvstore, lim2014mica,
dragojevic2014farm}, database systems~\cite{zamanian2017end}, and
other related systems and studies~\cite{ousterhout2015ramcloud,
dragojevic2017rdma}.  The latency for a remote memory access in recent
RDMA systems is around 1--3 microseconds under load, making it about
$20x$ more costly than a main-memory access~\cite{dragojevic2017rdma}.

Second, we observe that the per-machine latency can be further reduced by using the \emph{parallel slackness} technique~\cite{valiant1990bridging}. Assume there are $P = n^{\epsilon}$ machines, where $0 < \epsilon < 1$. Instead of running the algorithm directly on the machines, we split each machine into $T^\delta$ virtual machines of space $S = T^{\epsilon - \delta}$, for some small constant $0 < \delta < \epsilon$. Then, we execute the algorithm on the virtual machines.
Each physical machine can simulate the virtual machines concurrently. In particular, when a virtual machine is stalled on a network access, the physical machine can swap it out and run a step of the next virtual machine, thereby hiding the latency of the query (note the purpose of hyper-threading in modern systems is also to perform latency-hiding).

Observe that this requires that the algorithm that is executed can be run on machines with space $S = T^{\epsilon - \delta}$. We remark that most of the $\MPC{}$ algorithms that require $S = T^{1-\Theta(1)}$ space per machine can run as long as $S = T^\epsilon$, for any constant $\epsilon$ (incurring only a constant factor increase in the number of rounds). In particular, this property holds for all algorithms presented in this paper.

%Finally, we have performed an empirical evaluation \klcomment{TODO: add more details}.

\myparagraph{Fault tolerance.}
One of the key properties of the popular distributed computation frameworks implementing the \MPC{} model is fault-tolerance. This is achieved by storing the messages sent to each machine on persistent storage, usually a distributed file system~\cite{mapreduce, DBLP:conf/mss/ShvachkoKRC10,spark, pregel}. When a machine fails, it can simply restart its computation from scratch and read its input again. The \AMPC{} model has similar characteristics. The key requirement is to implemented each DDS in a fault-tolerant way~\cite{bigtablepaper}. After that, a failing machine can be simply replaced with a different machine that would perform the computation from scratch. Note that this crucially relies on the fact that the contents of $\dds_{i-1}$ do not change within round $i$.
Overall, we believe that the \AMPC{} model can be implemented with similar fault-tolerance characteristics to the implementations of the \MPC{} model.

\myparagraph{Query contention.}
In the \AMPC{} model each machine can issue $O(S)$ queries to the DDS in each round, but the machines are otherwise not restricted in accessing the DDS. In particular, all machines may be accessing the same key-value pair, which could potentially lead to contention. We observe that under the following natural assumptions, this should not be an issue.
\begin{enumerate}[topsep=5pt,itemsep=2pt]
	\item $P = O(S^{1 - \delta})$ for a constant $\delta > 0$. This is a realistic condition for at least two reasons. First, typically the number of machines $P$ is up to few thousands, but the space $S$ of each machine is billions of bytes of memory. Apart from that, in the popular distributed computation frameworks, there is a master machine that communicates with all other machines in each round~\cite{mapreduce, pregel}. To make sure that the amount of this communication is bounded by $O(S)$, we must have $P = O(S)$.
\item The DDS is handled by $P$ machines, each having $O(S)$ space. (In practice these can be either the machines doing the computation, or a different set of machines.) Each of these machines is capable of answering at most $O(S)$ queries, which is in line with our assumptions on the machines performing the computation.
\item The key-value pairs are randomly and independently assigned to the machines handling the DDS. The choice of the keys queried by each machine is independent of this mapping. This is a natural assumption, which basically says that the algorithm executed by a machine is independent of the technical details of the system.
\item Each worker machine queries for each key at most once. Note that this can be assumed without loss of generality, since the machines have sufficient space to cache the results of the queries they make.
\end{enumerate}

The lemma below uses the above assumptions to analyze contention. It shows that each machine handling the DDS needs to answer $O(S)$ queries w.h.p.
Note that in the statement of the lemma the balls are the key-value pairs and their weights give the number of times they are queried for.

\begin{restatable}{lemma}{contention}\label{lem:contention}
Consider $T$ balls, each having an integer weight between $0$ and $P$.
The total weight of all balls is $T$.
Assume each of the balls is inserted uniformly at random to one of $P$ bins. The bin for each ball is chosen independently of the choices for other balls and of the ball weight.
Let $S = T/P$ and assume that $P=O(S^{1-\Omega(1)})$.

Then, the total weight of balls in each bin is $O(S)$ with high probability.
\end{restatable}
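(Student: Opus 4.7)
The plan is to prove the lemma by applying Bernstein's (or Bennett's) inequality to the total weight in a fixed bin, then taking a union bound over all $P$ bins. Fix a bin $j$ and for each ball $i$ let $X_{i,j} = w_i \cdot \mathbb{1}[\text{ball } i \text{ lands in bin } j]$, so the total weight in bin $j$ is $W_j = \sum_i X_{i,j}$, a sum of independent random variables. First I would compute $\E[W_j] = \sum_i w_i/P = T/P = S$. For the variance, the key observation is that $\operatorname{Var}(X_{i,j}) \le w_i^2/P$, and because every $w_i \le P$ we have $\sum_i w_i^2 \le P \sum_i w_i = PT$, giving $\operatorname{Var}(W_j) \le T = PS$. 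Each $X_{i,j}$ is also bounded in $[0,P]$.

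Feeding these into Bernstein's inequality with deviation $t = S$ yields
\[
\Pr[\, W_j > 2S \,] \;\le\; \exp\!\left(-\frac{S^{2}}{2\operatorname{Var}(W_j) + \tfrac{2}{3} P \cdot S}\right) \;=\; \exp\!\bigl(-\Omega(S/P)\bigr).
\]
This is where the assumption $P = O(S^{1-\Omega(1)})$ becomes essential: it gives $S/P = \Omega(S^{\delta})$ for some constant $\delta > 0$. Combined with $T = PS \le S^{2-\delta}$, we get $S/P \ge T^{\Omega(1)}$, so the failure probability is $\exp(-T^{\Omega(1)})$, which is smaller than $T^{-c}$ for any constant $c$. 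A union bound over all $P \le T$ bins then gives that every bin has weight at most $2S = O(S)$ with high probability.

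The main thing to get right is the variance estimate: without using the bound $w_i \le P$ it is easy to produce a bound that is too weak to close the calculation. The step $\sum_i w_i^2 \le P \sum_i w_i$ is crude, but combined with the slack provided by $P = O(S^{1-\Omega(1)})$ it is more than sufficient. An alternative route, which I would fall back on if a cleaner presentation were desired, is to bucket balls by weight class (powers of two up to $P$) and apply a weighted Chernoff bound to each class separately; every class then contributes $O(S/\log P)$ to any bin w.h.p.\ and the result follows after summing over the $O(\log P)$ classes. However, Bernstein applied once to all balls is the most direct approach and I would use it as the main proof.
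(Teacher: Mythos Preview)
Your proof is correct. The Bernstein argument is clean: the variance bound $\operatorname{Var}(W_j) \le \sum_i w_i^2/P \le P\sum_i w_i/P = T = PS$ (using $w_i \le P$) together with the range bound $M = P$ gives $\Pr[W_j > 2S] \le \exp(-\Omega(S/P))$, and the hypothesis $P = O(S^{1-\delta})$ turns this into $\exp(-\Omega(S^\delta)) = \exp(-T^{\Omega(1)})$, which easily survives a union bound over $P$ bins. The computation $S/P \ge T^{\Omega(1)}$ is correct as well, since $T = PS \le cS^{2-\delta}$ forces $S = \Omega(T^{1/(2-\delta)})$.

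The paper states this lemma (as a \texttt{restatable}) but does not include a proof in the body; the restated version and its proof were apparently deferred to an appendix not present in the source you were given. So there is nothing to compare against. For what it is worth, your Bernstein route is the standard and most direct way to handle weighted balls-into-bins with a cap on the individual weights, and your fallback (dyadic bucketing plus per-class Chernoff) is the other textbook option; either would be a perfectly acceptable proof here.
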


\myparagraph{Disallowing writes.}
In principle, the \AMPC{} model could allow not only random reads, but also writes within a round. However, this would have a number of undesirable consequences. First of all, failure recovery becomes much more complex. When a machine fails, it cannot simply restart its computation within a round from scratch, since the contents of the DDS may have changed since the round has started. Moreover, if writes were allowed, the machines could wait for the result of the computation of other machines, but the model would not account for this waiting time. In fact, one round of \AMPC{} with writes could be used to simulate $O(S)$ steps of a PRAM algorithm.

\section{Preliminaries}\label{sec:prelims}
We denote an unweighted graph by $G(V, E)$, where $V$ is the set of
vertices and $E$ is the set of edges in the graph. The number of
vertices in a graph is $n = |V|$, and the number of edges is $m =
|E|$. Vertices are assumed to be indexed from $0$ to $n-1$.  For
undirected graphs we use $N(v)$ to denote the neighbors of vertex $v$
and $\emph{deg}(v)$ to denote its degree. We use $D$ to refer to the
diameter of the graph, or the longest shortest path distance between
any vertex $s$ and any vertex $v$ reachable from $s$. We assume that
there are no self-edges or duplicate edges in the graph.

Most of the algorithms presented in this paper take a graph as an input.
In this case, the input size $N$ is equal to $n + m$, that is the total number of vertices and edges in the graph.
We also assume that $S = n^\epsilon$, where $\epsilon$ is a constant, such that $\epsilon \in (0, 1)$.
Depending on the algorithm, the total space of all machines required by the algorithm is either $\Theta(N)$ or $\Theta(N \log N$).
In the latter case, the number of machines is $P = \Theta((N \log N) / n^\epsilon)$.
In some cases, we may obtain a faster algorithm if $T = n^{1+\Omega(1)}$.

We say that an event happens \emph{with high probability}
(w.h.p.) if it happens with probability at least $1- 1/n^{c}$, where the constant $c$ can be made arbitrarily large.

In the descriptions of our algorithms we highlight the parts which rely on \AMPC{} model features.
The remaining parts can be implemented in \MPC{} model using standard primitives, such as sorting, duplicate removal, etc.
%For completeness, we describe these primitives and the implementations of our algorithms in Appendix~\ref{apx:implementation-details}.
%The implementation of other parts of our algorithms can be implemented using standard \MPC{} primitives, such as sorting, duplicate removal, etc.

%we provide more details
%about both the primitives in Appendix~\ref{apx:implementation-details}
%when we describe how our algorithms are implemented.
%\myparagraph{\MPC{} Primitives}
%The following primitives can be implemented in $O(1/\epsilon)$ rounds
%of the $\MPC{}$ model using $O(n^{\epsilon})$ space per machine:
%\emph{sorting, filtering, prefix sums,
%predecessor} and \emph{duplicate removal}.
%All
%of the algorithms here use space proportional to the input to the
%primitive.

\section{Warm-up: The \twocycle{} Problem}\label{sec:twocycle}

In this section we give an algorithm that given an instance of the \twocycle{}
problem, solves it in $O(1)$ rounds w.h.p. Recall that an instance of
the problem is either a single cycle on $n$ vertices, or two cycles on
$n/2$ vertices each. The problem is to detect whether the graph
consists of one cycle, or two cycles. Our algorithm for the problem
runs over a number of rounds. In each round, we sample each vertex
with probability $n^{-\epsilon/2}$. Our goal is to replace the paths
between sampled vertices with single edges, which can be viewed as
contracting the original graph to the samples. We perform this
contraction by traversing the cycle in both directions for each
sampled vertex until we hit another vertex that is sampled. Traversing
the cycle is implemented using the adaptivity of the model. In order to achieve this, it suffices to store in the DDS the graph, in which the sampled vertices are marked.
Using a sample probability of $n^{-\epsilon/2}$, we can show that the number
of vertices shrinks by a factor of $n^{\epsilon/2}$ in each round
w.h.p. Therefore, after $O(1/\epsilon)$ rounds, the number of
remaining vertices and edges is reduced to $O(n^{\epsilon})$, at which
point, we can fit the graph in the memory of a single machine and
solve the remaining problem in a single round.

\begin{lemma}\label{lem:iteration-shrink}
Let $G$ be a graph consisting of cycles, and let $n$ be the initial
number of vertices in $G$. Consider a cycle with size
$k = \Omega(n^{\epsilon})$ in some iteration of the loop of
$\shrink{}(G, \epsilon, O(1/\epsilon))$. The size of this cycle
shrinks by at least a factor of $n^{\epsilon/2}$ after this iteration w.h.p.
\end{lemma}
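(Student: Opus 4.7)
The plan is to reduce the lemma to a standard Chernoff concentration bound on the number of sampled vertices on the cycle in question. Fix a cycle $C$ of length $k = \Omega(n^\epsilon)$. Each of its $k$ vertices is sampled independently with probability $p = n^{-\epsilon/2}$, so the number $X$ of sampled vertices on $C$ is a sum of $k$ independent Bernoulli random variables with mean $\mu \eqdef kp = k \cdot n^{-\epsilon/2}$.

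First I would argue that after one iteration of \shrink{} the new length of $C$ is exactly $X$. Traversing the cycle from each sampled vertex in both directions until another sampled vertex is reached replaces every arc between two consecutive sampled vertices by a single edge. Since $C$ is a cycle, the result is a cycle on precisely the set of sampled vertices (or, if $X = 0$ or $X = 1$, a degenerate object that we can analyse trivially; the $X = 0$ case is ruled out w.h.p.\ by the Chernoff lower tail below). Thus establishing the shrinkage reduces to bounding $X$ from above.

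Next I would apply the multiplicative Chernoff bound $\Pr[X \geq (1+\delta)\mu] \leq \exp(-\delta^2 \mu / 3)$ with a small constant $\delta > 0$. Because $k = \Omega(n^\epsilon)$, we have $\mu = \Omega(n^{\epsilon/2})$, which is polynomial in $n$; hence the failure probability is at most $\exp(-\Omega(n^{\epsilon/2})) \leq n^{-c}$ for any fixed $c$. Consequently $X \leq (1+\delta)\mu = (1+\delta)\, k/n^{\epsilon/2}$ with high probability, so $C$ shrinks by a factor of at least $n^{\epsilon/2}/(1+\delta)$, which is the claimed shrinkage up to constants (absorbed into the definition of $\Omega(n^\epsilon)$ or by taking $p$ to be a constant multiple of $n^{-\epsilon/2}$). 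A symmetric lower-tail Chernoff bound also shows $X \geq (1-\delta)\mu = \Omega(n^{\epsilon/2}) \geq 1$ w.h.p., so the contraction step is well-defined on $C$.

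There is no real obstacle here; the only subtlety is the bookkeeping of the constant factor in the shrinkage statement, and checking that the assumption $k = \Omega(n^\epsilon)$ is exactly what makes $\mu$ large enough to give a high-probability tail bound (as opposed to an in-expectation statement). A union bound over the $O(n/k) \leq n^{1-\epsilon}$ cycles of size $\Omega(n^\epsilon)$ present in the current iteration preserves the high-probability guarantee across all sufficiently large cycles simultaneously, which is what subsequent parts of the \twocycle{} analysis will need.
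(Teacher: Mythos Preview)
Your proposal is correct and follows essentially the same approach as the paper: both reduce the statement to a Chernoff bound on the number of sampled vertices on the cycle, using that $\mu = k \cdot n^{-\epsilon/2}$ is large because $k = \Omega(n^{\epsilon})$. Your write-up is in fact slightly more thorough than the paper's, which only gives the upper-tail bound and omits the lower-tail check that $X \geq 1$ and the union bound over cycles.
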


%		\item Repeat the following steps until the number of vertices that remain in the graph is $O(n^{\delta})$:
\begin{tboxalg}{$\shrink(G=(V, E), \delta, t)$}\label{alg:shrink}
	\begin{enumerate}[ref={\arabic*-}, topsep=0pt,itemsep=0ex,partopsep=0ex,parsep=1ex, leftmargin=*]
    \item For $i := 1, \ldots, t$:
			\begin{enumerate}[ref={(\alph*)}, topsep=5pt,itemsep=0ex]
        \item Sample each vertex independently with probability $n^{-\delta/2}$.
		Denote the set of sampled vertices by $M$.
          Randomly distribute the vertices of $M$ to the machines.

        \item For each sampled vertex $v$ traverse the cycle in each direction
	until a sampled vertex is reached. Let $l_v$ and $r_v$ be the sampled vertices,
					where the each of two traversals finishes.

      \end{enumerate}
    \item Return the graph $(M, \{xl_x \mid x \in M\} \cup \{xr_x \mid x \in M\})$.
  \end{enumerate}
\end{tboxalg}

\begin{tboxalg}{$\twocycle(G=(V, E))$}\label{alg:twocycle}
	\begin{enumerate}[ref={\arabic*-}, topsep=0pt,itemsep=0ex,partopsep=0ex,parsep=1ex, leftmargin=*]
    \item $G' := \shrink{}(G, \epsilon, O(1/\epsilon))$
    \item Solve the \twocycle{} problem on $G'$, which has size
      $O(n^{\epsilon})$ w.h.p. on a single machine.
	\end{enumerate}
\end{tboxalg}

\begin{proof}
Consider the cycle, which by assumption has size $\Omega(n^{\epsilon})$.
Let the vertices in this cycle be labeled $1, \ldots, k$, and let $X_i$
be the random variable that is $1$ if the $i$th vertex is marked, and
$0$ otherwise. Let $X = \sum_{i=1}^{k} X_i$. Since the sampling
probability in each round is $n^{ - \epsilon/2}$, we have that $E[X]
= \frac{k}{n^{\epsilon/2}} = \Omega(\log n)$. Applying a Chernoff
bound, we have:
\[
  P(X \geq (1 + c)E[X]) \leq \exp(-c^{2} \log n / 3) = n^{-c^{2}/3}.
\]
The number of vertices in this cycle in the next iteration is just the
number of cycle vertices that are sampled. By the argument above, the
number of sampled vertices is $O(E[X]) = O(\frac{k}{n^{\epsilon/2}})$ w.h.p.
	which is a shrink by a factor of $\Omega(n^{\epsilon/2})$.
\end{proof}

\begin{lemma}\label{lem:twocyclerounds}
Let $G$ be an $n$-vertex graph consisting of cycles and let $G' = \shrink{}(G, \epsilon, O(1/\epsilon))$.
Then $G'$ is a graph which can be obtained from $G$ by contracting edges.
	The length of each cycle in $G'$ is $O(n^\epsilon)$.
\end{lemma}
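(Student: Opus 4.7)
The plan is to establish the two assertions separately: first that $G'$ is obtainable from $G$ by edge contractions, and then that each cycle in $G'$ has length $O(n^\epsilon)$.

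For the contraction claim I would induct on the $O(1/\epsilon)$ iterations of the outer loop of $\shrink{}$. At the start of any iteration the current graph is a disjoint union of cycles; on each such cycle $C$, the iteration retains only the sampled vertices $M \cap V(C)$ and, for each sampled $x$, adds edges $xl_x$ and $xr_x$ where $l_x, r_x$ are the next sampled vertices encountered along $C$ in the two directions. This coincides exactly with the graph obtained from $C$ by contracting every edge of $C$ incident to a non-sampled vertex, which is a sequence of edge contractions. A composition of edge contractions is itself an edge contraction, so after $O(1/\epsilon)$ iterations $G'$ is an edge contraction of the original $G$. A nice byproduct is that the graph remains a disjoint union of cycles throughout, keeping us within the hypotheses of Lemma~\ref{lem:iteration-shrink}.

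For the length bound I would apply Lemma~\ref{lem:iteration-shrink} iteratively. As long as a cycle has length $\Omega(n^\epsilon)$ at the start of an iteration, its length shrinks by a factor of at least $n^{\epsilon/2}$ w.h.p. Hence, starting from length at most $n$, after $t$ iterations the length is at most $n^{1-t\epsilon/2}$; choosing $t$ to be a sufficiently large constant multiple of $1/\epsilon$ (e.g.\ $t = \lceil 2(1-\epsilon)/\epsilon\rceil + O(1)$) drives the length to $O(n^\epsilon)$. Once a cycle reaches that length, subsequent iterations can only contract it further (by the first claim), so the bound is preserved. A union bound over at most $n$ cycles and $O(1/\epsilon)$ iterations yields the conclusion w.h.p., since Lemma~\ref{lem:iteration-shrink}'s failure probability is $n^{-c}$ for a constant $c$ that can be made as large as desired.

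The main obstacle I anticipate is handling the threshold cleanly: Lemma~\ref{lem:iteration-shrink} only supplies a shrink factor for cycles of length $\Omega(n^\epsilon)$, so I must argue that cycles that have already dropped below that threshold do not grow back and that degenerate cases (a cycle with very few sampled vertices at some intermediate step) do not spoil either conclusion. The first issue is immediate from the contraction claim, and for the second I would note that because cycles only appear as contractions of earlier cycles, a cycle of length $\leq n^\epsilon$ in any intermediate graph corresponds to a contracted ancestor cycle whose further shrinkage can only yield cycles of length $O(n^\epsilon)$, consistent with the final bound.
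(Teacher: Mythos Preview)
Your proposal is correct and follows essentially the same approach as the paper: the contraction claim is argued from the structure of the iteration (the paper simply says it is clear from the pseudocode), and the length bound is obtained by iterating Lemma~\ref{lem:iteration-shrink} to get length $O(n^{1-i\epsilon/2})$ after $i$ iterations, terminating after $2(1-\epsilon)/\epsilon$ iterations. Your treatment of the threshold case and the explicit union bound are more careful than the paper's terse version, but the underlying argument is the same.
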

\begin{proof}
From the pseudocode, it is clear that $G'$ can be obtained from $G$ by contracting edges.
By Lemma~\ref{lem:iteration-shrink}, the size of a cycle with size $k =
\Omega(n^{\epsilon})$ shrinks by a factor of $n^{\epsilon/2}$ w.h.p.
after an iteration of the loop in \shrink{}.
%
%Initially, the size of the cycle(s) in $G$ are $O(n)$. The only assumption for
%Lemma~\ref{lem:iteration-shrink} is that the cycle has size at least
%$\Omega(N^{\epsilon})$, and observe that this assumption is met until
%the size of the cycle(s) is $o(N^{\epsilon})$ and therefore can be
%processed on one machine.
Hence, after the first iteration, the length of each
cycle(s) is $O(n^{1 - \epsilon/2})$ w.h.p, and after the $i$th
iteration, the length of the cycle(s) is $O(n^{1 - i\epsilon/2})$.
After $\frac{2(1-\epsilon)}{\epsilon}$ iterations, the size of the
cycle(s) is $O(n^{\epsilon})$.
%, and therefore $O(1/\epsilon)$
%iterations suffices to reduce the size of the cycle(s) to
%$O(N^{\epsilon})$, at which point the problem can be solved on a
%single machine in one round.
\end{proof}

\begin{lemma}\label{lem:twocyclecomm}
The total communication of each machine is $O(n^{\epsilon})$ in each
round w.h.p., where $n$ is the initial number of vertices in $G$.
\end{lemma}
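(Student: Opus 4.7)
The plan is to decompose a machine's communication in a single iteration of the loop of $\shrink{}$ into two parts: writes issued by the machine (marking sampled vertices and outputting contracted edges), and adaptive reads performed during the two cycle traversals launched from each sampled vertex assigned to the machine. Bounding the writes is easy, so the main task will be to bound the total traversal length on a single machine.

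First I would bound the number of sampled vertices placed on each machine. In an iteration whose current cycle has length $n' \le n$, each vertex is sampled independently with probability $p = n^{-\epsilon/2}$, giving $M = \Theta(n' p)$ sampled vertices in expectation. These are then assigned uniformly at random to $P = \Theta(n^{1-\epsilon})$ machines, so by a standard Chernoff bound each machine receives $O(M/P + \log n) = O(n^{\epsilon/2})$ sampled vertices w.h.p.\ as long as the cycle still has length $\Omega(n^\epsilon)$; once the graph fits on a single machine by Lemma~\ref{lem:twocyclerounds}, the remaining round is trivial.

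Next I would bound the length of any single traversal. The gaps between consecutive sampled vertices around the cycle are stochastically dominated by independent geometric random variables with parameter $p$, so by a union bound over the at most $n$ gaps every gap has length $O(p^{-1} \log n) = O(n^{\epsilon/2} \log n)$ w.h.p. Combining the two bounds, each machine performs $O(n^{\epsilon/2}) \cdot 2 \cdot O(n^{\epsilon/2} \log n) = O(n^\epsilon \log n)$ adaptive reads across its traversals; adding the $O(n^{\epsilon/2})$ writes gives a per-machine total of $\Ot{n^\epsilon}$.

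The main obstacle is the extra $\log n$ factor from the crude maximum-gap argument. It can be removed by analyzing the sum of traversal lengths on a machine directly: conditioned on which sampled vertices a machine receives, the sum of traversal lengths is a sum of $O(n^{\epsilon/2})$ near-independent geometric random variables of mean $1/p$, whose expectation is $O(n^\epsilon)$ and which concentrate around it by a Chernoff-type bound for sums of geometrics. Alternatively, since $\epsilon$ is a free constant, the $\log n$ factor can simply be absorbed into $n^\epsilon$ by working with a slightly larger exponent, which does not change the $O(1/\epsilon)$ round count from Lemma~\ref{lem:twocyclerounds}.
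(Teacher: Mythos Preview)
Your proposal is correct and, in its refined form, matches the paper's argument exactly: the paper bounds the number of samples per machine by $O(n^{\epsilon/2})$ via Chernoff, then directly applies a Chernoff bound for sums of geometric random variables to the total traversal length $Z = \sum Z_i$ with $\E[Z] = n^{\epsilon}$, obtaining $Z = O(n^{\epsilon})$ w.h.p. Your detour through the crude max-gap bound is unnecessary---the paper skips straight to the sum-of-geometrics argument you describe as the fix.
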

\begin{proof}
Following the proof of Lemma~\ref{lem:twocyclerounds}, in the $i$th
iteration, there are $O(n^{1 - i(\epsilon/2)})$ vertices remaining, and
$O(n^{1 - (i+1)\epsilon/2})$ vertices sampled w.h.p. Recall that there are
$P = \Omega(n^{1-\epsilon})$ machines, so each machine receives
$O(n^{1 - (i+1)(\epsilon/2) - (1 - \epsilon)}) = O(n^{\epsilon/2})$
samples w.h.p.
%\he{Power of the right handside does not seem rigth}

Observe that the length of each traversal is a geometric random variable with $p =
n^{-\epsilon/2}$. Let $Z_i$ be the number of vertices that the $i$th
sample traverses until it hits a sampled vertex. Let $Z =
\sum_{i=0}^{n^{\epsilon/2}} Z_i$ be a random variable measuring the
total amount of queries used by the samples assigned to some machine.
Using a Chernoff bound for geometric random variables, and the fact
that $E[Z] = n^{\epsilon}$, for some arbitrary large constant $c$ we have
\[
  P(Z \geq (1 + c)E[Z]) \leq \exp(-c^{2} E[Z] / 3) \leq \exp(-c^{2}
  \log n) = O(n^{-c}).
\]
Therefore, the total number of queries sent by the machine is
$O(n^{\epsilon})$ on any round w.h.p.
\end{proof}

Combining the previous lemmas gives us the following theorem:
\begin{theorem}\label{thm:twocycle}
  There is an \AMPC{} algorithm solving the \twocycle{} problem in
  $O(1/\epsilon)$ rounds w.h.p. using $O(n^\epsilon)$ space per
  machine and $O(n)$ total space.
%\sbcomment{What is $T$ here, $O(n)$ or $\widetilde{O}(n)$?}
%  where the total space is $O(n)$.
%  $O(n^{\epsilon})$ queries in each round w.h.p.
\end{theorem}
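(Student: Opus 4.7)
The plan is to assemble the three preceding lemmas into the overall bound. By Lemma~\ref{lem:twocyclerounds}, running $\shrink{}(G,\epsilon,O(1/\epsilon))$ produces a graph $G'$ that is obtained from $G$ by a sequence of edge contractions; edge contractions within a cycle do not change whether the graph is one cycle or two, so solving \twocycle{} on $G'$ solves it on $G$. Moreover, the same lemma guarantees that every cycle in $G'$ has length $O(n^\epsilon)$ w.h.p., so the entire $G'$ fits on a single machine and is resolved in one additional round.

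For round complexity, each iteration of the outer loop of \shrink{} costs $O(1)$ \AMPC{} rounds. The nontrivial step is the traversal from each sampled vertex: the current contracted graph, with the sample bits attached, sits in a DDS, and a machine assigned sampled vertex $v$ walks along the cycle in both directions by issuing adaptive DDS queries within a single round, stopping as soon as it encounters another sampled vertex. This is exactly what the adaptivity of the \AMPC{} model enables, and is the only step that does not admit a plain \MPC{} implementation. Combining $O(1/\epsilon)$ iterations with the final one-machine round gives $O(1/\epsilon)$ rounds in total.

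For the space bounds, Lemma~\ref{lem:twocyclecomm} already shows that each machine issues $O(n^\epsilon)$ queries per round w.h.p., so $S = O(n^\epsilon)$ space per machine suffices. The input has $\Theta(n)$ size, and no DDS produced by \shrink{} grows beyond the input (each iteration only shrinks the vertex and edge set), so $O(n)$ total space across $P = \Theta(n^{1-\epsilon})$ machines is enough to host the entire computation.

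The main obstacle in this combined argument is coordinating the high-probability guarantees of Lemmas~\ref{lem:iteration-shrink} and~\ref{lem:twocyclecomm} across all $O(1/\epsilon)$ iterations, over every cycle that appears during the process, and across all $P$ machines. Each individual bound already holds with probability $1 - 1/n^{c}$ for an arbitrarily large constant $c$, so a single union bound absorbing an $O(P/\epsilon) = \poly(n)$ loss still yields a w.h.p.\ guarantee of the same form. I would make this union bound explicit before concluding the stated bounds on rounds, per-machine space, and total space.
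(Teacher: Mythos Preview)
Your proposal is correct and matches the paper's approach exactly; the paper's own proof is even more terse, stating only that the theorem follows by combining the preceding lemmas, whereas you have spelled out how Lemmas~\ref{lem:twocyclerounds} and~\ref{lem:twocyclecomm} yield correctness, round complexity, and the space bounds, and made the union bound explicit.
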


\section{Maximal Independent Set}\label{sec:mis}

In this section, we prove the following result.

\begin{theorem}\label{thm:mis}
	For any constant $\epsilon \in (0, 1)$, there exists an \AMPC{} algorithm that finds a maximal independent set of a given graph $G(V, E)$ in $O(1)$ rounds. The algorithm is randomized and fails with arbitrarily small constant probability, uses $O(n^\epsilon)$ space per machine and $O(m)$ total space.
\end{theorem}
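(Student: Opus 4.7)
The plan is to implement the classical random-priority sequential greedy MIS algorithm directly inside the AMPC model, using the fact that a single AMPC round admits $\Theta(n^\epsilon)$ adaptive reads per machine to collapse the algorithm's sequential dependency chain into $O(1)$ rounds. In a preprocessing round we draw independent uniform ranks $\pi(v) \in [0,1]$ for each $v \in V$, store the adjacency lists of $G$ in the DDS keyed by vertex identifier (so that $\pi(v)$, $N(v)$, and the eventual MIS status of $v$ can each be fetched with a single query), and partition the vertices across machines using the ranks as the balancing key, assigning machine $i$ the vertices whose ranks lie in the interval $[i/P, (i+1)/P)$.

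In the main round, each machine resolves the MIS status of every one of its assigned vertices $v$ by performing an adaptive recursive exploration that implements the serial rule ``$v$ is in the MIS iff every neighbor $u$ with $\pi(u) < \pi(v)$ is not in the MIS''. Concretely, the machine visits the lower-rank neighbors of $v$ one by one via adaptive DDS reads, recursively resolves their statuses, and caches already-resolved statuses locally within the round to avoid redundant work. Because each step is an adaptive read, the entire predecessor DAG of $v$ can be explored inside a single AMPC round as long as the total number of reads performed by the machine stays within the $\Theta(n^\epsilon)$ budget. A final round collects the MIS labels and writes out the answer.

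The principal obstacle is bounding the per-machine read count. The key structural facts, which follow from prior analyses of random-priority sequential MIS, are that the total work summed over all vertices is $O(n + m)$ and the longest chain of sequentially dependent resolutions is $\mathrm{polylog}(n)$, both holding with high probability over the random ranks. Under our rank-based partition, the total-work bound yields $O(n^\epsilon)$ expected reads per machine, while the chain-length bound is dwarfed by the $O(n^\epsilon)$ per-round adaptive-read budget, so no extra rounds are needed to follow long dependency chains. A Chernoff-type concentration argument over the random ranks shows that every machine stays within its budget with probability $1 - o(1)$; machines that would overflow abort, and the small residual set of unresolved vertices is cleaned up in $O(1)$ further rounds by redistributing them at the granularity of individual vertices. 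Tuning the slack in the concentration step drives the overall failure probability below any prescribed constant, establishing Theorem~\ref{thm:mis}.
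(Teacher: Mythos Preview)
Your proposal has a genuine gap in the load-balancing step. The claim that ``a Chernoff-type concentration argument over the random ranks shows that every machine stays within its budget'' does not go through: the per-vertex query cost $q_\pi(v)$ is \emph{not} bounded by anything close to $n^\epsilon$; it can be $\Theta(n)$ or larger for individual vertices (the Yoshida--Yamamoto--Ito bound is only on the \emph{sum} $\sum_v q_\pi(v)$), and the $q_\pi(v)$'s are highly correlated through $\pi$. Standard Chernoff bounds therefore do not apply. The rank-based partition makes this worse rather than better: the machine holding the highest-rank band must recursively resolve dependencies that reach arbitrarily far into lower bands, and with local caching only it can still incur $\Theta(m)$ reads in the worst case. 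Finally, your cleanup step cannot rescue this: a single unresolved vertex $v$ with $q_\pi(v)>n^\epsilon$ still requires more than $n^\epsilon$ adaptive reads no matter which machine it is redistributed to, and the $\mathrm{polylog}\,n$ bound you cite is on the \emph{depth} of the recursion tree, not its size.

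The paper handles exactly this difficulty, and the mechanism is the main content of the proof. Rather than trying to prove concentration, it \emph{truncates} each vertex's query process at $n^\epsilon$ reads and repeats for $O(1/\epsilon)$ iterations. The crucial lemma shows that after iteration $i$ every vertex with $q_\pi(v)\le n^{i\epsilon/2}$ is settled; the proof combines the $O(\log^2 n)$ depth bound of Blelloch--Fineman--Shun with the observation that, within any depth-length window of the truncated query sequence, some vertex must be one that was unsettled in the previous iteration (hence has $q_\pi \ge n^{(i-1)\epsilon/2}$) yet gets settled now without further recursion, so the truncated process saves $\Omega(n^\epsilon\cdot n^{(i-1)\epsilon/2}/\log^2 n)$ queries relative to the full process. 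This geometric amplification is what turns the global $O(m+n)$ bound into an $O(1)$-round algorithm; your proposal is missing an analogue of it.
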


\newcommand{\lfmis}[1]{\ensuremath{\mathsf{LFMIS}(#1)}}

Our algorithm for Theorem~\ref{thm:mis} finds the {\em lexicographically first MIS} over a random permutation of the vertices. Given a permutation $\pi: [n] \to [n]$, the lexicographically first MIS according to $\pi$, which we denote by $\lfmis{G, \pi}$, is constructed by processing the vertices in the order of $\pi$ and adding a vertex $v$ to the MIS if and only if it does not lead to a conflict. That is, $v$ joins the MIS if none of its already processed neighbors have joined it before.

This ``greedy''-type construction of $\lfmis{G, \pi}$ is not directly applicable in our model if we require a small number of rounds. Instead, we have to parallelize this algorithm. For this, we first recall a query process by \cite{yoshida, DBLP:conf/focs/NguyenO08} to determine whether a given vertex $v$ belongs to $\lfmis{G, \pi}$, then show how we can use this query process to efficiently construct $\lfmis{G, \pi}$ in $O(1)$ rounds.

For a vertex $v$, define $f(v, \pi)$ to be the indicator function for whether $v \in \lfmis{G, \pi}$. That is, $f(v, \pi) = 1$ if $v \in \lfmis{G, \pi}$ and $f(v, \pi)=0$ otherwise. Note that we hide the dependence on $G$ in the $f(v, \pi)$ notation for brevity. Determining the value of $f(v, \pi)$ can be done in the following recursive way.

\begin{tboxalg}{$f(v, \pi)$}\label{alg:misoracle}
	\begin{enumerate}[topsep=0pt,itemsep=0ex,partopsep=0ex,parsep=1ex, leftmargin=*]
		\item Let $u_1, \ldots, u_d$ be the neighbors of $v$ sorted such that we have $\pi(u_1) < \ldots < \pi(u_d)$.
		\item For $i = 1, \ldots, d$:
			\begin{itemize}[topsep=0pt,itemsep=0ex,partopsep=0ex,parsep=1ex, leftmargin=*]
				\item If $\pi(u_i) < \pi(v)$:
				\begin{itemize}
					\item if $f(u_i, \pi) = 1$, then return $f(v, \pi) = 0$.
				\end{itemize}
			\end{itemize}
		\item Return $f(v, \pi)=1$.
	\end{enumerate}
\end{tboxalg}

Now, for a vertex $v$, define $q_\pi(v)$ to be the number of (recursive) calls to the algorithm above for determining the value of $f(v, \pi)$. A result of Yoshida et al.~\cite{yoshida} can be succinctly summarized as:

\begin{proposition}\label{pr:yoshida}
	If permutation $\pi$ is chosen uniformly at random, then $\E_\pi[\sum_{v \in V} q_\pi(v)] \leq m+n$.
\end{proposition}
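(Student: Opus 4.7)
The plan is to adapt the analysis of Yoshida--Yamamoto--Ito~\cite{yoshida}. The first step is a deterministic identity valid for every permutation $\pi$, obtained by unfolding the recursion one level:
\[
\sum_{v \in V} q_\pi(v) \;=\; n \;+\; \sum_{(v, u)\ \text{probing}} q_\pi(u),
\]
where an ordered pair $(v, u)$ is called \emph{probing} (for $\pi$) if $u \in N(v)$, $\pi(u) < \pi(v)$, and every neighbor $u'$ of $v$ with $\pi(u') < \pi(u)$ satisfies $f(u', \pi) = 0$ (equivalently, the iteration inside $f(v, \pi)$ actually reaches $u$). The ``$n$'' accounts for the top-level calls, and each probing pair contributes exactly the size of the recursion subtree rooted at the subcall $f(u, \pi)$.

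Taking expectations and regrouping the right-hand side by undirected edge, the proposition reduces to showing that each edge contributes at most $1$ in expectation, i.e.,
\[
\sum_{\{u,v\}\in E} \E_\pi\!\left[\, q_\pi(u) \cdot \mathbf{1}[u \text{ probed from } v] + q_\pi(v) \cdot \mathbf{1}[v \text{ probed from } u] \,\right] \;\le\; m.
\]
Note that only one of the two indicators in each bracket can be nonzero, depending on whether $\pi(u) < \pi(v)$ or $\pi(v) < \pi(u)$, so this is really a bound on the expected number of probing incidences at $\{u,v\}$ weighted by the subtree size at the lower-$\pi$ endpoint.

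For the per-edge bound, I would fix $\{u, v\}$, condition on $\pi(u) < \pi(v)$, and further expand $q_\pi(u)$ itself as a sum over strictly $\pi$-decreasing paths $u = w_0, w_1, \ldots, w_\ell$ in $G$ that are realized inside the recursion tree of $f(u, \pi)$. Multiplied by $\mathbf{1}[u \text{ probed from } v]$, each such path is in bijection with a realized root-to-descendant path $v, u, w_1, \ldots, w_\ell$ in the tree rooted at $v$, together with a consistent collection of ``skipping'' conditions $f(u', \pi) = 0$ on the neighbors that the iteration passes through without stopping at each step. A telescoping/charging argument over the uniformly random $\pi$ then shows that summing over all extensions of the edge $(v, u)$ (together with the symmetric term from the opposite orientation) yields a total of at most $1$ per edge, as required.

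The main obstacle is precisely this telescoping step. A naive union bound over $\pi$-decreasing paths yields $\sum_k (\#\text{paths of length } k)/(k{+}1)!$, which is exponentially large in dense graphs, so the argument must crucially exploit the stopping rule of $f$: once an MIS neighbor is discovered the recursion halts, and the conditions $f(u', \pi) = 0$ on ``skipped'' neighbors are mutually restrictive in just the right way for the contributions of deeper recursion paths to cancel against shallower extensions they refine. Organizing this cancellation carefully—tracking how the probability mass of each realized path is distributed as $\pi$ ranges over all permutations—is where the Yoshida--Yamamoto--Ito proof does its real work, and is the step I would have to reconstruct in detail.
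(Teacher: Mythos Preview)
The paper does not prove this proposition at all; it is stated as a known result of Yoshida, Yamamoto, and Ito and simply cited. There is therefore no ``paper's own proof'' to compare your attempt against.

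As for your sketch on its own merits: the one-level unfolding identity
\[
\sum_{v} q_\pi(v) \;=\; n \;+\; \sum_{(v,u)\ \text{probing}} q_\pi(u)
\]
is correct, and the reduction to a per-edge expectation bound
\[
\E_\pi\!\left[\, q_\pi(u)\,\mathbf{1}[(v,u)\ \text{probed}] \;+\; q_\pi(v)\,\mathbf{1}[(u,v)\ \text{probed}] \,\right] \;\le\; 1
\]
is exactly the structure of the Yoshida--Yamamoto--Ito argument. Your further rewriting of $q_\pi(u)$ as a count of realized $\pi$-decreasing probing paths is also valid (it is just the bijection between recursion-tree nodes and root-to-node paths). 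However, as you explicitly acknowledge, you have not supplied the actual per-edge bound, and that step is the entire content of the result. The vague appeal to ``telescoping/cancellation'' is not a proof, and in fact is not how Yoshida et al.\ proceed: their argument for the per-edge bound is a direct combinatorial charging over permutations, not a cancellation over path lengths. So your proposal is a correct outline of where the difficulty lies, but it is not a proof --- and since the paper itself defers to the citation, neither source you have in hand actually contains one.
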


Suppose that we fix a random permutation over the vertices, say, by each vertex $v$ picking a random real $\rho_v \in [0, 1]$ and constructing $\pi$ by sorting the vertices based on $\rho$. Having this, the natural way of implementing Algorithm~\ref{alg:misoracle} in the \AMPC{} model is to assign each vertex $v$ to a specific machine and simulate $f(v, \pi)$ there by exploring the relevant neighborhood of $v$ according to the query process. The main challenge is that while Proposition~\ref{pr:yoshida} bounds the total communication by $O(m)$ in expectation, there is no upper bound on $q_\pi(v)$ for a given vertex $v$. Note that the machine responsible for vertex $v$ has to query $q_\pi(v)$ vertices. Since the number of such queries within a round is bounded by the local space on the machines ($S$), this idea does not work if $q_v = \omega(n^\epsilon)$ for some vertex $v$. To ensure that we do not violate the per machine capacity on the number of queries and the total communication size, we use the iterative and truncated variant of the algorithm shown in Algorithm~\ref{alg:mpcmis}.
\newcommand{\mis}[0]{\textsc{MaximalIndependentSet}}

\newcommand{\isinmis}[1]{\ensuremath{\mathsf{is\text{-}in\text{-}MIS}(#1)}}
\newcommand{\unknown}[0]{\ensuremath{\mathsf{unknown}}}
\newcommand{\true}[0]{\ensuremath{\mathsf{true}}}
\newcommand{\false}[0]{\ensuremath{\mathsf{false}}}
\newcommand{\truncatedquery}[0]{\ensuremath{\textsc{TruncatedQuery}}}

It is clear from the description that $\isinmis{v}$ either equals the value of $f(v, \pi)$ or is $\unknown$ and that once Algorithm~\ref{alg:mpcmis} terminates it returns $\lfmis{G, \pi}$. It remains to analyze the number of required iterations.

\begin{tboxalg}{$\mis(G=(V, E))$}\label{alg:mpcmis}
	\begin{enumerate}[topsep=0pt,itemsep=0ex,partopsep=0ex,parsep=1ex, leftmargin=*]
		\item Fix a permutation $\pi: [n] \to [n]$ on the vertices uniformly at random.
		\item Each vertex $v$ is assigned to a machine $\mu_v$ chosen uniformly at random from $m/n^\epsilon$ machines.
		\item For each vertex $v$, set $\isinmis{v} \gets \unknown$.
		\item Repeat while the graph is not empty:\label{line:loop}
			\begin{enumerate}[topsep=0pt,itemsep=0ex,partopsep=0ex,parsep=1ex, leftmargin=*]
				\item For each vertex $v$ remaining in the graph, $\mu_v$ runs $\truncatedquery(v, \pi, n^\epsilon)$. If $\isinmis{v} = \true$, then for every neighbor $u$ of $v$, set $\isinmis{u} \gets \false$ and remove $u$ from the graph as well.
			\end{enumerate}
	\end{enumerate}
\end{tboxalg}

\begin{tboxalg}{$\truncatedquery(v, \pi, c)$}\label{alg:capacitatedmis}
	\textbf{Input:} a vertex $v$, a permutation $\pi$ and a capacity $c$ on the number of (recursive) queries allowed.
	
	\textbf{Output:} the number of conducted queries.
	\begin{enumerate}[topsep=10pt,itemsep=0ex,partopsep=0ex,parsep=1ex, leftmargin=*]
		\item If $c=0$, then return 0.
		\item Let $u_1, \ldots, u_d$ be the remaining neighbors of $v$ sorted such that $\pi(u_1) < \ldots < \pi(u_d)$.
		\item $q \gets 1$
		\item For $i = 1, \ldots, d$:
			\begin{enumerate}[topsep=0pt,itemsep=0ex,partopsep=0ex,parsep=1ex, leftmargin=*]
				\item If $\pi(v) < \pi(u_i)$, set $\isinmis{v} = \true$ and return $q$.
				\item $q \gets q + \truncatedquery(u_i, \pi, c-q)$
				\item If as a result of the query above, we get that $u_i$ is in the MIS, then set $\isinmis{v} = \false$ and return $q$.
				\item If $q \geq c$, then return $q$.
			\end{enumerate}
	\end{enumerate}
\end{tboxalg}

%\klcomment{this change to \isinmis{} will only be visible starting from next round. I think this algorithm assumes the change is visible immediately. Looks like the truncated query procedure should return both the number of queries and the status of the vertex}

%% ANS: no it will not be available in the next round. We access only the is-in-mis value of those vertices that we query in this round.

As we will later discuss, each iteration of the Line~\ref{line:loop} loop in Algorithm~\ref{alg:mpcmis} can be implemented in $O(1)$ rounds of $\AMPC$. Thus, it suffices to bound the number of iterations of this loop to bound the round complexity of the algorithm. In what follows, we simply say ``iteration'' to refer to iterations of this loop. We use the following lemma to bound the number of iterations by $O(1/\epsilon)$. For simplicity, we say a vertex $v$ {\em is settled} if $\isinmis{v} \not= \unknown$.

\begin{lemma}\label{lem:iteration}
	By the end of iteration $i$, any vertex $v$ with $q_\pi(v) \leq n^{i \epsilon/2}$ will be settled.
\end{lemma}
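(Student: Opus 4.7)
My plan is to prove the lemma by induction on $i$, using the recursive structure of the Yoshida--Nguyen--Onak query process underlying $\truncatedquery$.

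For the base case ($i=1$), I will observe that $n^{\epsilon/2} \le n^\epsilon$, which is exactly the capacity supplied to $\truncatedquery$. Hence for any $v$ with $q_\pi(v) \le n^{\epsilon/2}$, the full recursion tree $T(v)$ of $f(v,\pi)$ has at most $n^\epsilon$ nodes, the call $\truncatedquery(v,\pi,n^\epsilon)$ never aborts at step~4(d), and $v$ is settled via step~4(a) or~4(c).

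For the inductive step, assume the claim for $i$: every $u$ with $q_\pi(u) \le n^{i\epsilon/2}$ is settled by the end of iteration $i$, so in particular every vertex still unsettled at the start of iteration $i+1$ has $q_\pi > n^{i\epsilon/2}$. Fix $v$ with $q_\pi(v) \le n^{(i+1)\epsilon/2}$ still unsettled. I will first argue that the set of vertices touched by $\truncatedquery(v,\pi,n^\epsilon)$ in this iteration is exactly the \emph{unsettled subtree} $\tilde T(v) \subseteq T(v)$: the subtree obtained from $T(v)$ by pruning every subtree rooted at a graph vertex that is already settled. This is correct because any settled neighbor of $v$ must be labelled $\false$ (an in-MIS neighbor would have propagated $\isinmis{v} = \false$ in an earlier iteration) and settled vertices have been removed from the residual graph. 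I then bound $|\tilde T(v)| \le n^\epsilon$ via the telescoping identity
\[
  q_\pi(v) \;=\; \sum_{u \in \tilde T(v)} \gamma(u), \qquad \gamma(u) \;:=\; 1 + \sum_{\substack{w \text{ settled child}\\\text{of } u \text{ in } T(v)}} q_\pi(w),
\]
which holds because each node of $T(v)$ is charged exactly once to its highest unsettled ancestor. At a leaf $\ell$ of $\tilde T(v)$ every child (in $T(v)$) is settled, so $\gamma(\ell) = q_\pi(\ell) > n^{i\epsilon/2}$, which caps the number of leaves by $q_\pi(v)/n^{i\epsilon/2} \le n^{\epsilon/2}$. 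Combining this with a length bound on internal unsettled chains --- exploiting the strict monotonicity $q_\pi(\text{parent}) \ge 1 + q_\pi(\text{child})$ along any descending path in $T(v)$, together with $q_\pi > n^{i\epsilon/2}$ at the bottom of such a chain --- gives $|\tilde T(v)| \le n^\epsilon$, so the call completes within its capacity and settles $v$.

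The step I expect to be the main obstacle is the bound on the \emph{internal} nodes of $\tilde T(v)$. The leaf count falls out cleanly from the disjointness of the subtrees rooted at the leaves inside $T(v)$, but a thin chain of internal unsettled nodes, each with only one unsettled child, contributes essentially no additional $q_\pi$-mass to the telescoping sum and cannot be controlled by the same disjoint-subtree argument. Handling this requires combining the strict drop in $q_\pi$ along descending paths with the inductive lower bound $q_\pi > n^{i\epsilon/2}$ at the bottom of each such chain, so that the chain length --- and hence the internal node count of $\tilde T(v)$ --- is absorbed into the $n^\epsilon$ budget.
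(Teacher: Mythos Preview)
Your base case and the leaf count in the inductive step are sound: every leaf $\ell$ of $\tilde T(v)$ is unsettled at the start of iteration $i+1$, hence $q_\pi(\ell)>n^{i\epsilon/2}$ by induction, and since the subtrees of $T(v)$ rooted at distinct leaves of $\tilde T(v)$ are pairwise disjoint, $\tilde T(v)$ has fewer than $q_\pi(v)/n^{i\epsilon/2}\le n^{\epsilon/2}$ leaves. The gap is exactly where you flag it, and the proposed fix does not close it. The monotonicity $q_\pi(\text{parent})\ge 1+q_\pi(\text{child})$ only says that along a descending chain of length $m$ the value of $q_\pi$ drops by at least $m$; with $q_\pi$ starting at most at $n^{(i+1)\epsilon/2}$ and ending above $n^{i\epsilon/2}$, you obtain merely $m\le n^{(i+1)\epsilon/2}-n^{i\epsilon/2}$, which for $i\ge 2$ already exceeds $n^{\epsilon}$. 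So neither the trivial bound $|\tilde T(v)|\le q_\pi(v)$ nor a bound of the form (leaves)$\times$(chain length) is useful past the step $i=1\to i=2$.

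The missing ingredient is a depth bound on the recursion that is independent of $q_\pi$. The paper imports the $O(\log^2 n)$ bound of Blelloch, Fineman and Shun on the dependency depth of greedy MIS under a random permutation, which caps the depth of $T(v)$---and hence of $\tilde T(v)$---by some $D=O(\log^2 n)$ with high probability. With that in hand your outline would finish cleanly via $|\tilde T(v)|\le(\text{number of leaves})\cdot(D+1)\le n^{\epsilon/2}\cdot O(\log^2 n)\ll n^{\epsilon}$. The paper phrases essentially the same computation contrapositively: if $v$ were still unsettled after iteration $i+1$, the $n^{\epsilon}$ calls made would contain at least $n^{\epsilon}/(D+1)$ leaves of the truncated tree, each the root of a disjoint subtree of $T(v)$ of size exceeding $n^{i\epsilon/2}$, forcing $q_\pi(v)\gg n^{(i+1)\epsilon/2}$. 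Either way, the $O(\log^2 n)$ depth bound is what carries the argument; the $q_\pi$-monotonicity you propose is too weak by a polynomial factor.
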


We prove Lemma~\ref{lem:iteration} by induction. For the base case with $i=1$, it is clear that by the end of iteration 1, any vertex with $q_\pi(v) \leq n^{\epsilon/2}$ will be settled since the capacity on the number of queries is $n^\epsilon$. Suppose that this holds for any iteration $i \leq r-1$. We show that by the end of iteration $r$, any vertex $v$ with $q_v \leq n^{r\epsilon/2}$ will be settled.

For any vertex $v$, define $s_i(v) = \langle w_1, w_2, \ldots, w_q \rangle$ to be the sequence of vertices for which function $\truncatedquery$ is (recursively) called within iteration $i$ sorted in the same order that they were visited. Observe that for a vertex $v$ that is not settled by the end of iteration $i$,  $s_i(v)$ has to have exactly $n^\epsilon$ vertices since this is the cap that we set for any vertex in Algorithm~\ref{alg:mpcmis}. The following claim will be useful in proving Lemma~\ref{lem:iteration}.

\begin{claim}\label{cl:hitsettledhighquery}
	Fix an arbitrary vertex $v$ and let $D$ be an upper bound on the depth of recursions within $\truncatedquery(v, \pi, c)$. If $v$ is not settled by the end of iteration $r$,  the following holds if we denote $s_r(v) = \langle w_1, \ldots, w_q \rangle$: for any integer $j \in [q-D]$, there is at least one vertex $w \in \{w_j, w_{j+1}, \ldots, w_{j+D}\}$ such that $q_\pi(w) \geq n^{(r-1)\epsilon/2}$ and that $w$ is settled during iteration $r$ without calling $\truncatedquery$ on any other vertex.
\end{claim}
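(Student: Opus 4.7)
The plan is to handle the two halves of the claim separately, relying on the inductive hypothesis of Lemma~\ref{lem:iteration} for the $q_\pi$ bound and on a depth-based pigeonhole argument on the recursion tree for the ``settled-without-further-recursion'' condition. For the $q_\pi$ bound, I invoke Lemma~\ref{lem:iteration} applied to iteration $r-1$: a vertex $w$ appears in $s_r(v)$ only if the call $\truncatedquery(w, \pi, \cdot)$ was issued during iteration $r$, which in turn requires $w$ to still be present in the current graph at the start of iteration $r$ (i.e., $\isinmis{w} = \unknown$). The contrapositive of the inductive hypothesis then yields $q_\pi(w) > n^{(r-1)\epsilon/2}$ for every such $w$, so the $q_\pi$ condition is automatic for whichever vertex we ultimately select from the window.

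Next, I interpret $s_r(v)$ as the preorder traversal of the recursion tree $T$ of $\truncatedquery(v, \pi, n^\epsilon)$. The key structural observation is that if $w_k$ is an internal node of $T$ then the next vertex $w_{k+1}$ in the preorder must be its first child and hence sits at depth exactly one greater than $w_k$. Consequently, if every vertex in the window $\{w_j, \ldots, w_{j+D}\}$ were internal, iterating this observation along the window would produce a descendant of $w_j$ at depth at least $\mathrm{depth}(w_j) + D + 1 \ge D + 1$, contradicting the assumption that $D$ upper bounds the recursion depth. Therefore at least one of the $D + 1$ vertices in the window must be a leaf of $T$.

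Finally, I identify leaves of $T$ with the ``Case~A'' branch of Algorithm~\ref{alg:capacitatedmis}. A leaf $w$ of $T$ is a vertex whose call $\truncatedquery(w, \pi, c^w)$ returned without making a further non-trivial recursive call on any other vertex. Inspecting the pseudocode, the only branch that terminates without such a call is the one triggered when the smallest-priority remaining neighbor $u_1$ of $w$ satisfies $\pi(u_1) > \pi(w)$, at which point the algorithm sets $\isinmis{w} \gets \true$ and returns. Hence this $w$ is settled during iteration $r$ without calling $\truncatedquery$ on any other vertex, which combined with the previous two steps proves the claim.

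The main subtlety I expect is the bookkeeping around degenerate zero-budget recursive calls: when a parent with $q = c - 1$ invokes a child with budget $0$, that child returns immediately without doing any work. For the equivalence ``leaf of $T$ $\Longleftrightarrow$ Case~A settlement'' to be clean, these trivial calls must be excluded from $s_r(v)$ and from $T$, which is consistent with the statement's accounting of $|s_r(v)|$ as the budget $n^\epsilon$ actually consumed. Once this definitional care is taken, the three steps above combine into a proof.
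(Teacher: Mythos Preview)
Your approach is essentially the paper's: both read $s_r(v)$ as the preorder traversal of the recursion tree, use a depth pigeonhole in the window to locate a leaf, and invoke the inductive hypothesis of Lemma~\ref{lem:iteration} for the $q_\pi$ bound (you apply it to every vertex of $s_r(v)$, the paper only to the chosen leaf, but this is cosmetic).

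There is, however, a real gap in your third step. The implication ``leaf of $T$ $\Rightarrow$ Case-A settlement'' is false even after you exclude zero-budget calls. If a leaf $w$ is invoked with remaining budget exactly $1$ and its lowest-priority remaining neighbor $u_1$ satisfies $\pi(u_1)<\pi(w)$, then $w$ issues a (trivial, budget-$0$) call on $u_1$, then hits the test $q\ge c$ in line~4(d) and returns with $\isinmis{w}$ still $\unknown$. So a leaf need not be settled. What saves the argument is that such a budget-exhausted leaf is necessarily the \emph{last} element $w_q$ of $s_r(v)$: once it returns, the exhaustion propagates up the entire call stack and no further vertex is visited. The paper's phrasing of the pigeonhole---it looks for a consecutive pair $(w_k,w_{k+1})$ inside the window with $w_{k+1}$ not a child of $w_k$---builds in the existence of a successor $w_{k+1}$, so the leaf $w_k$ it finds has $k\le j+D-1<q$ and cannot be that terminal budget-exhausted vertex. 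Your version, which merely asserts that \emph{some} vertex in $\{w_j,\dots,w_{j+D}\}$ is a leaf, may hand you $w_q$ when $j=q-D$. The fix is easy: sharpen your pigeonhole to note that if every $w_j,\dots,w_{j+D-1}$ were internal then already $w_{j+D}$ would sit at depth at least $D+1$, so the leaf you obtain lies strictly before $w_{j+D}$, has a successor in $s_r(v)$, and is therefore genuinely settled in iteration~$r$.
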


\begin{proof}
	Since $D$ is an upper bound on the depth of recursions, there is at least one vertex $w_{k+1} \in \{w_j, \ldots, w_{j+D}\}$ for $k \geq 1$ that is not queried by $w_{k}$ (otherwise the query depth from $w_j$ to $w_{j+D}$ would be $D+1$ which is a contradiction). We show that vertex $w_{k}$ satisfies the requirements.
	
	Observe that in Algorithm~\ref{alg:capacitatedmis}, we call $\truncatedquery$ only on vertices that are not already settled. Since $w_k \in s_r(v)$, vertex $w_k$ should not have been settled by the end of iteration $r-1$. By the induction hypothesis, this means that $q_\pi(w_k) > n^{(r-1)\epsilon/2}$. Furthermore, the fact that vertex $w_{k+1}$, the successor of $w_k$ in $s_r(v)$, is not queried by $w_k$, means that $w_k$ has not queried any vertices and thus it should be settled within round $r$, proving the claim.
\end{proof}

Now, take vertex $v$ with $q_v \leq n^{r\epsilon/2}$ and suppose for contradiction that $v$ is not settled by round $r$. Recall that this means $|s_r(v)| = \Theta(n^\epsilon)$. By Claim~\ref{cl:hitsettledhighquery}, there are at least $\Omega(|s_r(v)|/D)$ vertices $u$ with $q_u \geq n^{(r-1)\epsilon/2}$ that are settled within round $r$ without querying any vertex. This saves $v$ from querying $\Omega(\frac{|s_r(v)|}{D} \cdot n^{(r-1)\epsilon/2}) = \Omega(\frac{n^\epsilon}{D} \cdot n^{(r-1)\epsilon/2})$ vertices  compared to the original query process $f(v, \pi)$. The recursion depth can be bounded in a straightforward way by adapting the arguments of \cite{DBLP:conf/spaa/BlellochFS12} (see the paragraph below) by $D = O(\log^2 n)$. Thus, a total of $\Omega(\frac{n^\epsilon}{\log^2 n} \cdot n^{(r-1)\epsilon/2}) \gg n^{r\epsilon/2}$ queries should not be sufficient to settle the status of $v$. This is a contradiction since we assumed $q_v \leq n^{r\epsilon/2}$.

\myparagraph{Bounding the recursion depth.} The analysis of \cite{DBLP:conf/spaa/BlellochFS12} bounds the round complexity of a parallel implementation of the greedy (lexicographically first) MIS algorithm by $O(\log^2 n)$. While applying this result as a black-box does not bound the depth of recursions, here we argue that with minor modifications, it also bounds the recursion depth by $O(\log^2 n)$. The analysis of \cite{DBLP:conf/spaa/BlellochFS12} divides the vertex set into disjoint layers $V_1, V_2, \ldots, V_k$ where $k=O(\log n)$ such that the vertices in $V_i$ have higher priority of joining MIS than all those in $V_j$ for $j > i$. The key property is that after finding the vertices of the lexicographically first MIS among $V_1, \ldots, V_i$ for any $i$, and removing their neighbors, the remaining vertices in $V_{i+1}$ have logarithmic degree within their partition and thus, any path going through vertices of increasing priorities has length $O(\log n)$. Collecting all these paths of all layers, this gives a $O(\log^2 n)$ round complexity for the parallel implementation of the algorithm.

In order to adapt this to our setting, suppose that the recursion depth of a vertex is larger than $\beta\log^2 n$ for some large enough constant $\beta$. This means that in one of the layers defined by \cite{DBLP:conf/spaa/BlellochFS12}, say $V_i$, the query process has to visit more than $\alpha \log n$ vertices for some sufficiently large constant $\alpha$. Note that by the analysis of \cite{DBLP:conf/spaa/BlellochFS12}, there should be a vertex in layers $V_1, \ldots, V_{i-1}$ that destroys this path by arguments of \cite{DBLP:conf/spaa/BlellochFS12}. Let $v$ be a vertex that has to be destroyed. This means that $v$ has to have a neighbor in lower layers which joins the MIS.  The crucial observation, here, is that since the query process is greedy, $v$ will immediately go to lower levels and will be eventually deleted from the graph before visiting any other vertex in layer $V_i$. Thus, the arguments of \cite{DBLP:conf/spaa/BlellochFS12} still hold and the recursion depth is bounded by $O(\log^2 n)$.

\section{Undirected Graph Connectivity}\label{sec:connectivity}
In this section we provide an implementation of the recent undirected connectivity
algorithm due to Andoni \etal{}~\cite{andoniparallel} in the \AMPC{} model in  $O(\log\log_{T/n} n)$ rounds.
This improves upon the original result by a factor of $\log D$, where $D$ is the diameter of the input graph.
Note that by our assumption $T = O(n + m)$.
However, if we had $T = n^{1+\Omega(1)}$, our algorithm would run in a constant number of rounds.

At a high level, we show that using the \AMPC{} model, we can speed-up
the central procedure in the algorithm of Andoni \etal{}, so that it
runs in only $O(1)$ rounds, compared to $O(\log D)$ rounds in the
original algorithm.  However, for completeness we provide the entire
algorithm.

\myparagraph{The Andoni \etal{} Algorithm.}
If the total amount of space used is not a concern, undirected
connectivity can easily be solved in $O(\log D)$ rounds in the \MPC{}
model with sublinear space per machine as follows.
%The algorithm uses the idea
%behind Savitch's reachability algorithm and folklore parallel
%algorithms for computing directed reachability in $\mathsf{NC}$.
Let $G^2$ be a \emph{squared} graph, that is a graph obtained by adding edges between each vertex and all its $2$-hop neighbors.
By repeating the squaring $\lceil \log_2 D\rceil$ times, we obtain a graph, where 
each connected component is transformed into a clique.  The issue with the squaring approach is that the total amount of space used to store the resulting graph
can be much as much as $\Theta(n^{2})$, which is prohibitive for sparse graphs.

The main idea behind the Andoni \etal{} algorithm is to implement the
graph exponentiation idea in a more space-efficient manner.
The algorithm runs over a number of phases and in each phase contracts the graph, reducing the number of vertices.
The algorithm also assigns each vertex a \emph{budget} $d$, making sure that the sum of budgets of all vertices is $O(T)$.
As the number of vertices decreases, the budgets are increased.
The budgets of all vertices are equal to each other at all times.

Let us now describe one phase.
The first step of a phase increases the degree of each vertex to at least $d$, by adding edges within each connected component (and without adding parallel edges).
After that every vertex is sampled as a {\em leader}
independently with probability $\Theta(\log n/d)$ (assume $d = \Omega(\log n)$ for
simplicity).
The non-leader vertices are then contracted to leaders (a leader exists in each
non-leader's neighborhood w.h.p.). The total number of vertices after
contraction
is $\tilde{O}(n/d)$. Therefore, in the next round, the
budgets can be increased to $d^{1.4}$ and the sum of budgets will be
$O(n d^{0.4}) = O(T)$. As the budgets increase double exponentially, $O(\log
\log n)$ phases are required before each connected component shrinks to one vertex.

In the Andoni \etal{} algorithm, each phase is implemented by using
$O(\log D)$ rounds of squaring. In each round, each vertex that has
degree less than its budget connects itself to its 2-hop (or a
sufficiently sized subset of its 2-hop if the 2-hop is too large). In
the worst case, the algorithm requires $O(\log D)$ rounds to make the
degrees equal to the budget.

\myparagraph{Implementing Andoni \etal{} in \AMPC{}.}
We show how to implement each phase in $O(1)$ rounds using the adaptivity in the
\AMPC{} model. Recall that the goal of a phase is to increase the
degree of each vertex in the current graph to $d$, the current
per-vertex budget.
In the \AMPC{} model, this can be achieved by exploring the local
neighborhood of each vertex in one round (e.g. using BFS).
The search stops once it visits $d$ vertices.

One possible worry is that there may be significant overlap amongst
the neighbors that we query while probing from a vertex. Namely,
running BFS until it visits $d$ vertices may require visiting $\Omega(d^2)$ edges.
We cope with this issue by setting the budget to be the square root of the space available per each vertex.
Therefore, when a vertex attempts to increase its degree to $d$ in a phase, we can afford to issue $d^2$ queries.

Once $d$ reaches $n^{\epsilon/2}$ we no longer increase it, which assures that the number of queries issued per vertex is at most $O(n^\epsilon)$.
From this point the number of vertices decreases by a factor of $d = \Omega(n^{\epsilon/2})$ in each iteration.
It follows that after $O(1/\epsilon)$ iterations each
connected component will be contracted to a single vertex. The \AMPC{} algorithm for a phase is given in Algorithm~\ref{alg:increase-degree}.

\vspace{-0.2cm}

\begin{tboxalg}{$\increasedegree(G=(V, E), d)$}\label{alg:increase-degree}
	\begin{enumerate}[ref={\arabic*-}, topsep=0pt,itemsep=0ex,partopsep=0ex,parsep=1ex, leftmargin=*]
    \item For each vertex $v$, run a BFS from $v$,
      until it visits $d$ vertices or visits the entire connected
      component of $v$. Let $N_v$ be the set of visited vertices.

\item Return $(V, E \cup \{vx \mid v \in V, x \in N_v\})$
	\end{enumerate}
\end{tboxalg}

Algorithm~\ref{alg:increase-degree} can be naturally implemented in the \AMPC{} model.
First, the graph is written to the DDS.
Then, the vertices are randomly assigned to machines, and each machine runs the BFS for its assigned vertices.
We now analyze this implementation.

\begin{lemma}\label{lem:increase-degree}
Algorithm~\ref{alg:increase-degree} runs in $O(1)$ rounds of \AMPC{}.
The total number of queries is $O(nd^2)$, where $n$ denotes the number of vertices of $G$.
If the number of machines is $n^{1-\epsilon}d^2$, the maximum number of queries a machine makes is $O(n^{\epsilon})$ w.h.p.
\end{lemma}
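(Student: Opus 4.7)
The plan is to verify three claims in turn: the round count is $O(1)$, the aggregate query count is $O(nd^2)$, and under the stated machine count each machine makes $O(n^\epsilon)$ queries w.h.p. First I would set up the data layout: at the beginning of the phase we write $G$ to the DDS, keyed so that any vertex's adjacency list can be probed one neighbor at a time. In the next round each machine, holding its randomly-assigned batch of roots, runs the truncated BFS of Algorithm~\ref{alg:increase-degree} for each of its roots sequentially, chasing neighbor pointers on the fly via the adaptive DDS access. Writing out the added edges $\{vx \mid x \in N_v\}$ in the final round completes the phase, so the algorithm spans only a constant number of \AMPC{} rounds.

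For the total query bound, I would analyze a single BFS from root $v$ carefully. The BFS must be implemented so that, upon popping a vertex $u$ from the queue, its neighbors are queried one at a time and the search terminates the instant the visited set grows to size $d$. Under this discipline the visited set (counting queued but unpopped vertices) never exceeds $d$ during the BFS. When processing a popped vertex $u$, each query either discovers a new vertex---which can occur at most $d-1$ times across the entire BFS, since at most $d$ vertices are ever visited---or hits an already-visited vertex, which can happen at most $d-1$ times per $u$, since at any moment there are at most $d-1$ visited vertices other than $u$ for $u$ to collide with. Summed over the at most $d$ popped vertices, this gives $O(d^2)$ queries per BFS, and hence $O(nd^2)$ queries across all $n$ roots.

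The per-machine bound follows directly from Lemma~\ref{lem:contention}. I would model each vertex $v$ as a ball whose weight is the (deterministic) number of queries its BFS issues; by the previous paragraph this weight is an integer bounded by $d^2 \leq n^\epsilon$. The vertex-to-machine mapping is chosen uniformly and independently of the queries performed during the BFS, so the balls are assigned independently to $P = n^{1-\epsilon}d^2$ bins. The total weight is $O(nd^2)$, giving an average of $S = n^\epsilon$ per bin; since $d^2 \leq P$ whenever $\epsilon < 1$, Lemma~\ref{lem:contention} applies and yields that every machine handles $O(n^\epsilon)$ queries with high probability.

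The main obstacle is the per-BFS cost bound: a careless implementation that fully scans each popped vertex's adjacency list could burn $\Theta(\deg(u))$ queries on one high-degree $u$ and blow the per-root cost past $d^2$. The crucial implementation choice is to query neighbors one at a time and halt as soon as the visited count reaches $d$---a strategy the \AMPC{} model accommodates precisely because queries within a round may depend adaptively on earlier answers. Once the $O(d^2)$ per-BFS bound is secured, the total-query and load-balancing claims reduce to direct summation and a clean application of Lemma~\ref{lem:contention}.
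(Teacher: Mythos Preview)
Your arguments for the $O(1)$ round count and the $O(nd^2)$ total-query bound are correct, and in fact more careful than the paper's: the paper simply asserts that stopping after $d$ visited vertices caps the per-BFS cost at $O(d^2)$, whereas you justify the one-neighbor-at-a-time implementation explicitly.

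The per-machine bound, however, has a genuine gap. You invoke Lemma~\ref{lem:contention}, but you only check the weight cap (each ball has weight at most $P$); you do not check the lemma's other hypothesis, namely $P = O(S^{1-\Omega(1)})$. Here $P = n^{1-\epsilon}d^2$ and $S = n^{\epsilon}$, and this condition fails in general: for any $\epsilon < 1/2$ we already have $P \geq n^{1-\epsilon} > n^{\epsilon} = S$, let alone $P = O(S^{1-\delta})$. So Lemma~\ref{lem:contention} does not apply, and the appeal to it does not establish the w.h.p.\ bound. (There is also a mismatch with the lemma's framing that the number of balls equals the total weight $T$; in your setup there are $n$ balls but total weight $\Theta(nd^2)$.)

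The paper takes a more direct route: it applies a Chernoff bound to the \emph{number of vertices} landing on a fixed machine. With $P = n^{1-\epsilon}d^2$ machines, the expected count is $E[X] = n^{\epsilon}/d^2$, and because the algorithm enforces $d \leq n^{\epsilon/3}$ one gets $E[X] \geq n^{\epsilon/3}$, which is polynomially large and hence concentrates. Multiplying the (at most) $2E[X]$ vertices by the deterministic per-vertex cost $d^2$ yields $O(n^{\epsilon})$ queries per machine, and a union bound over the machines finishes. The key ingredient you are missing is this use of the bound $d \leq n^{\epsilon/3}$ to make the expected vertex load per machine large enough for concentration.
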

\begin{proof}
The algorithm clearly runs in $O(1)$ rounds. To bound the number of
queries, note that in the worst case we make a
query for every edge between a pair of vertices that are visited by
the BFS. Since we stop after visiting $d$ vertices, the maximum number
of queries we make is $O(d^2)$ and therefore the total number of
queries is $O(n^{\epsilon}d^2)$.

Next we use a Chernoff bound to bound the maximum number of queries a machine makes. Fix a machine $i$ and let binary random variable $X_v$ indicate whether vertex $v$ is assigned to machine $i$. Let $X=\sum_{v\in V} X_v$. Note that $X$ is the total number vertices assigned to machine $i$, and $E[X]= \frac{n^{\epsilon}}{d^2}\geq n^{\epsilon/3} $. By a Chernoff bound we have 
\begin{align*}
P\big(X\geq 2 E[X] \big)\leq \exp{\left(-\frac{E[X]}{3}\right)} = \exp{(-n^{\epsilon/3}/3)}.
\end{align*}
A union bound over all $d^2 n^{1-\epsilon}$ machines implies that the total number of queries of each machine is upper bounded by $E[X] \cdot d^2 = O(n^{\epsilon})$, with high probability. 
	%	\klcomment{Well, we should prove this somehow}
%  The algorithm clearly runs in $O(1)$ rounds. We now argue that each
%  machine performs $O(n^{\epsilon})$ queries unless the algorithm
%  fails. Let $d'_{i}$ be the degree bound in iteration $i$, $d_i$ be
%  the actual budget in this iteration, and $n_i$ be the number of
%  active vertices in this iteration. Unless the Andoni et al.
%  algorithm fails, we have that $n_i \cdot d_i = O(T)$. Therefore, the
%  total number of queries that are made is clearly $O(T)$, since
%  $d'_{i} = \sqrt{d_{i}}$. Arguing that each machine performs
%  $O(n^{\epsilon})$ queries follows from the fact that the currently
%  active vertices are randomly assigned to machines and $d'_{i}$ is at
%  most $O(n^{\epsilon/2})$.
\end{proof}

By plugging in the above procedure to the algorithm of Andoni \etal{},
we are able to reduce the running time by a factor of $\log D$.

In addition, we also show how to ensure that the algorithm runs
correctly with high probability, as opposed to the constant
probability of success of the original algorithm.  In order to achieve
this, we need to assure that after each vertex is chosen as a leader
with probability $\Theta(\log n / d)$, every vertex has a leader among
its neighbors with high probability.

To this end, we need to assure that the degree of each vertex in the
graph before sampling is $\Omega(\log n)$.  We can achieve this by
running Algorithm~\ref{alg:increase-degree} with $d = \Omega(\log n)$.
By Lemma~\ref{lem:increase-degree}, the number of queries issues by
the algorithm is $nd^2$, which does not exceed the allowed number of
queries $T = O(m)$, whenever $m = \Omega(n \log^2 n)$.  Hence, we
handle the case when $m = o(n \log^2 n)$ separately.

%
%Algorithm~\ref{alg:increase-degree} is sufficient to obtain a high
%probability bound for a connectivity algorithm, assuming that $m =
%O(n\log^2 n)$. In this case, in the first iteration, each vertex
%can increase its degree to $d=O(\log n)$ since the total space, $T =
%O(n\log^2 n)$. It is easy to see that by sampling the leaders with
%constant probability in the case when $d = O(\log n)$, each vertex
%sees a leader w.h.p. and the required reduction in the number of
%vertices happens w.h.p. However, it is not possible to achieve such a
%reduction w.h.p. when $m = o(n\log^2 n)$, since $d = o(\log n)$ in the
%initial rounds. The Andoni \etal{} algorithm does not handle this
%case, and therefore obtains a Las Vegas algorithm with constant
%success probability. However, we are able to use the power of the
%\AMPC{} model to devise an simple $O(1)$ round algorithm that reduces
%the number of vertices by a factor of $O(\log^2 n)$, and lets us
%obtain a high probability bound.

%raph{Handling graphs where $m = o(n\log^2 n)$}

When $m = o(n \log^2 n)$ we use the algorithm of~\cite{manuscript} to shrink the number
of vertices by a factor of $\Omega(\log^2 n)$ in $O(\log\log_{T/n} n)$ rounds w.h.p.
This procedure is implementable in the \MPC{} model and hence implementable in the \AMPC{} model as well.
The following lemma summarizes the properties of the algorithm.

\begin{lemma}[\cite{manuscript}]\label{lem:smalltotalspace}
 There exists an $O(\log \log n)$ round \MPC{} algorithm using $O(n^\epsilon)$
  space per machine and $O(m)$ total space that with high probability,
  converts any graph $G(V, E)$ with $n$ vertices and $m$ edges to a
 graph $G'(V', E')$ and outputs a function $f: V \to V'$ such that:
	\begin{enumerate}[ref={\arabic*},itemsep=-0.5ex]
		\item The number of non-isolated vertices in $G'$ is $O(n / \log^2 n)$ .
		\item $|E'| \leq |E|$.
		\item For any two vertices $u$ and $v$ in $V$, vertices $f(u)$ and $f(v)$ in $V'$ are in the same component of $G'$ if and only if $u$ and $v$ are in the same component of $G$.
	\end{enumerate}
\end{lemma}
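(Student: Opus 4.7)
The plan is to perform $T = \Theta(\log \log n)$ iterations of a random-mate style contraction, so that the number of currently non-isolated vertices shrinks by a constant factor per iteration with high probability. Let $G_0 = G$ and $f_0$ be the identity on $V$. In iteration $i$, every vertex still alive in $G_{i-1}$ is independently labeled $H$ or $T$ by a fair coin; each $T$-labeled vertex that has at least one $H$-labeled neighbor picks the $H$-neighbor of smallest identifier as its parent and is contracted into it. The resulting self-loops and parallel edges are then removed to obtain $G_i$, and $f_i$ is defined by composing $f_{i-1}$ with the new parent pointers. Finally set $G' = G_T$ and $f = f_T$.

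Conditions (2) and (3) are immediate from the construction: each iteration only performs edge-contractions, self-loop deletions, and parallel-edge merges, all of which preserve the partition of $V$ into connected components and never increase $|E|$. For condition (1), observe that any vertex $v$ that is non-isolated at the start of an iteration has at least one neighbor, so with probability at least $1/4$ it is labeled $T$ and at least one of its neighbors is labeled $H$; in that case $v$ becomes isolated in this iteration. Hence the expected number of non-isolated vertices is multiplied by at most $3/4$ per iteration, so after $T = \Theta(\log \log n)$ iterations it is $O(n/\log^2 n)$ in expectation. A concentration argument, discussed below, upgrades this to a with-high-probability guarantee after a union bound over the $O(\log \log n)$ iterations.

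Each iteration can be implemented in $O(1)$ $\MPC$ rounds on $n^{\epsilon}$-space machines using standard primitives: sampling the coin flips is local, choosing the minimum-ID $H$-neighbor of each $T$-vertex is a sort-and-aggregate on the edge list, applying the new parent pointers to edges is a sort-join between edges and pointers, and removing self-loops and parallel edges is a final sort pass over the edges. Since the edge count is monotonically non-increasing, the total space stays $O(m)$ throughout. Composing all $T$ iterations and collapsing pointer chains by a standard pointer-jumping pass yields $f$ in an additional $O(\log \log n)$ rounds (or in $O(1)$ rounds if kept up to date per iteration), matching the claimed $O(\log \log n)$ round bound.

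The main obstacle is turning the per-vertex probability $\geq 1/4$ of becoming isolated into a high-probability statement on the aggregate count, since adjacent vertices' fates are correlated. I would handle this by one of two routes. First, a bounded-differences inequality on the coin-flip vector: flipping a single vertex's coin changes the isolation outcome for only vertices in its current closed neighborhood, so once a light pre-processing step caps per-vertex degrees at $\polylog n$ (e.g., by contracting any vertex of very high degree into a canonical representative of its already-exposed cluster), the bounded-differences constant is $\polylog n$ and McDiarmid's inequality gives the constant-factor shrink w.h.p. Alternatively, one can restrict the coin flips in each round to a random subsample of vertices whose local contraction patterns depend on disjoint neighborhoods, and apply a Chernoff bound on that independent subsample. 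Either route supplies the required w.h.p. constant-factor shrink, which, combined with a union bound over $T = O(\log \log n)$ iterations, yields the lemma.
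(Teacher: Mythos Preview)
The paper does not prove this lemma; it is imported from an external manuscript and used as a black box, so there is no in-paper proof to compare against. Evaluating your argument on its own: the random-mate scheme and its $O(1)$-round \MPC{} implementation per iteration are fine, and the expected constant-factor shrink per iteration is correct (modulo the terminological slip that a $T$-vertex with an $H$-neighbor is \emph{removed by contraction}, not ``made isolated''). The genuine gap is the concentration step, and neither of your two sketches closes it.

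A single star on $n$ vertices already breaks the argument. If the center $c$ draws $H$ (probability $1/2$), roughly half the leaves are $T$ and contract into $c$, so the star halves; but if $c$ draws $T$, no leaf contracts (each leaf's only neighbor is $c$, which is $T$), and only $c$ itself merges into one $H$-leaf, leaving a star on $n-1$ vertices. Hence with probability $(1/2)^{T} = (\log n)^{-\Theta(1)}$ the center draws $T$ in every one of $T=\Theta(\log\log n)$ rounds and the graph loses only $O(\log\log n)$ vertices in total --- an inverse-polylogarithmic failure probability, not the required inverse-polynomial one. The same example defeats both proposed patches: the center has degree $n-1$, and ``capping'' it by contracting it into a ``canonical representative of its already-exposed cluster'' presupposes knowing the cluster you are trying to compute; and every leaf's contraction event depends on the single coin $b_c$, so there is no large subfamily of events with disjoint dependencies on which to run Chernoff. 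More generally, whenever some vertex has very large degree, the outcome of an entire iteration hinges on one coin, so no per-round high-probability bound can hold for unassisted $H/T$ random-mate. The cited result is stated (see the MSF section of the paper) as giving a constant-factor shrink \emph{with high probability in each step}; achieving that requires a different mechanism --- for example, first guaranteeing that every surviving vertex has $\Omega(\log n)$ distinct neighbors before sampling leaders, so that the ``no leader neighbor'' event has inverse-polynomial probability per vertex --- rather than plain random-mate.
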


%\begin{restatable}{theorem}{smalltotalspace}\label{thm:handlesparse}
%	There exists an \MPC{} algorithm that with high probability in
%  $O(\log\log_{T/n} n)$ rounds converts the input graph $G$ to a graph
%  $G'(V',E')$ where $|V'|\leq \frac{T}{10 \log n}$ and the number of
%  connected components of $G$ and $G'$ remains the same. The algorithm
%	can be adapted to use $n^{\delta}$ space per machine for any desirable small constant $\delta \in (0,1)$ and any desirable total space $T$ so long as $T = \Omega(m)$.
%\end{restatable}

\myparagraph{Undirected Connectivity.}
We assume that $m = \Omega(n\log^2 n)$ in what follows.
%, since we can apply
%Theorem~\ref{thm:handlesparse} to handle very sparse graphs.  
Next, we  describe the undirected connectivity algorithm in the \AMPC{} model.
%which is  modified version of the Andoni \etal{} conncetivity
%algorithm.

\clearpage

\begin{tboxalg}{\connectivity{}(G=(V, E))}\label{alg:connectivity}
	\begin{enumerate}[ref={\arabic*-}, topsep=0pt,itemsep=0ex,partopsep=0ex,parsep=1ex, leftmargin=*]

%    \item Let $n = |V|$ and $G_{C} = G$. If $m = o(n\log^2 n)$, let
%      $G_C$ be result of applying Corollary~\ref{cor:handlesparse} to
%      reduce the number of vertices to $O(n/\log^2
%%      n)$.\label{step:handlesparse}

		\item Let $d = \sqrt{T/n}$. Let $M
      : V \rightarrow \mathbb{Z}$ be a mapping from vertices to a
      unique identifier for their component (initially mapping each
      vertex to its own id).

    \item While $G$ contains at least one edge, perform the following steps:

    \begin{enumerate}
      \item Randomly assign the vertices in $G$ to machine and call
        $\increasedegree{}(G, d)$ for each vertex. Add the edges
        found to $G$.\label{step:increasedegree}

      \item Sample each vertex with probability $\Theta(\log n
        / d)$ to be a leader. \label{step:sampleleaders}

      \item Let $G'$ to be the graph formed by contracting each vertex
        $v$ to a leader in its neighborhood. If its
        degree is less than $d$, then contract it to the neighbor with
        lowest id.  Let $M'$ be a mapping $V(G) \rightarrow
        V(G')$ mapping each vertex to the vertex it contracts to in
        $G'$. Finally, update $G$ to be $G'$.
        \label{step:contractgraph}

\item Update $d = \min(d^{1.4}, n^{\epsilon/3}$).

      \item Update $M$ by mapping each $v \in V$ to $M'(M(v))$.
        \label{step:updatemapping}

    \end{enumerate}
    \item Output $M$
	\end{enumerate}
\end{tboxalg}

\begin{theorem}\label{thm:connectivity-rounds-queries}
Algorithm~\ref{alg:connectivity} computes the connected components of
an undirected graph in $O(\log \log_{T/n} n + 1/\epsilon)$ rounds of \AMPC{} w.h.p.
where the total space $T = \Omega(m + n)$.
%where in each round each machine performs $O(n^{\epsilon})$ queries.
\end{theorem}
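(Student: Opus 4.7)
The plan is to verify the Andoni \etal{} framework while exploiting the $O(1)$-round implementation of \increasedegree{} from Lemma~\ref{lem:increase-degree}, and then count phases in two regimes: (i) while $d < n^{\epsilon/3}$ the budget grows double-exponentially, and (ii) once $d$ is capped at $n^{\epsilon/3}$ every phase shrinks the vertex count by a factor of $\widetilde\Omega(n^{\epsilon/3})$. Throughout I assume $m = \Omega(n \log^2 n)$; the case $m = o(n\log^2 n)$ is reduced to the dense case in $O(\log \log_{T/n} n)$ rounds by Lemma~\ref{lem:smalltotalspace}.

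For correctness I will argue by induction on the phase that the contracted graph has the same connected component structure as $G$: \increasedegree{} only adds edges between vertices already in the same component (BFS stays inside a component), and contracting each vertex to a leader or a min-id neighbor preserves components. Hence after every phase the composed mapping $M$ correctly identifies a canonical representative per current supernode, and at termination two vertices are in the same component of $G$ iff they receive the same label. Leader-based contraction succeeds w.h.p.\ because after step~\ref{step:increasedegree} every non-isolated vertex has degree at least $d \geq \Omega(\log n)$ within its component, so sampling leaders at rate $\Theta(\log n/d)$ plus a union bound over the $n$ vertices guarantees each vertex has a leader in its closed neighborhood. Each phase can be implemented in $O(1)$ \AMPC{} rounds: step~\ref{step:increasedegree} by Lemma~\ref{lem:increase-degree}, and steps~\ref{step:sampleleaders}--\ref{step:updatemapping} by standard \MPC{} primitives (sorting, aggregation, pointer chasing over $O(1)$-depth mappings).

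The phase count follows from a double-exponential budget invariant. I will maintain $n \cdot d^2 = \widetilde O(T)$ at the start of every phase, so that the total query budget required by \increasedegree{} fits in total space $T$ and the per-machine bound of Lemma~\ref{lem:increase-degree} applies. Initially $d_0 = \sqrt{T/n}$ gives $n d_0^2 = T$. After one phase, the contracted graph has $\widetilde O(n/d)$ non-isolated vertices and the new budget is $d^{1.4}$, so the new product is $\widetilde O((n/d) \cdot d^{2.8}) = \widetilde O(n d^{1.8}) = \widetilde O(T^{0.9} n^{0.1}) = \widetilde O(T)$, preserving the invariant. In regime (i), iterating $d \mapsto d^{1.4}$ from $d_0 = (T/n)^{1/2}$ until $d \geq n^{\epsilon/3}$ requires $k$ phases with $1.4^k = \Theta(\log n / \log(T/n))$, i.e., $k = O(\log \log_{T/n} n)$. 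In regime (ii) the budget is capped at $n^{\epsilon/3}$ but each phase still shrinks the vertex count by $\widetilde\Omega(n^{\epsilon/3})$, so $O(1/\epsilon)$ additional phases suffice to eliminate all edges.

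The main obstacle will be verifying that the AMPC implementation of \increasedegree{} simultaneously respects (a) the per-machine query cap of $O(n^\epsilon)$, and (b) the total-space cap $\widetilde O(T)$ across the double-exponential growth of $d$; this is exactly why we take the BFS quota to be the square root of the available local query budget and cap $d$ at $n^{\epsilon/3}$. Once this invariant is in hand, summing the two regimes yields the claimed bound of $O(\log \log_{T/n} n + 1/\epsilon)$ \AMPC{} rounds w.h.p.
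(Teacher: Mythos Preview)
Your proposal is correct and follows essentially the same approach as the paper: maintain the invariant that the sum of squared budgets $n \cdot d^2$ stays $O(T)$, invoke Lemma~\ref{lem:increase-degree} to implement each phase in $O(1)$ \AMPC{} rounds within the per-machine and total query caps, and count phases in the two regimes (double-exponential growth of $d$ until $d = n^{\epsilon/3}$ giving $O(\log\log_{T/n} n)$ phases, then the capped regime giving $O(1/\epsilon)$ more). The paper's own proof is terser---it handles correctness implicitly and states the invariant without the $T^{0.9}n^{0.1}$ arithmetic---but the structure is identical; one small remark is that your computation $n d^{1.8} = T^{0.9} n^{0.1}$ uses the initial value $d_0 = \sqrt{T/n}$ and so literally only covers the first phase, whereas the inductive step is simply $n_{i+1} d_{i+1}^2 \le n_i d_i^{1.8} \le n_i d_i^2 \le T$.
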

\begin{proof}
For simplicity let us assume that the graph is connected.
Otherwise, we can apply the analysis to each connected component.
First, we show that the sum of all \emph{squared} budgets is $O(T)$ at every step of the algorithm.
	Indeed, this is true in the beginning, and in each iteration the number of nodes decreases by a factor of $d / \log n$ w.h.p. while the per-vertex squared budgets are increased by a factor of $d^{0.8}$.
Clearly, the algorithm also assures that $d = O(n^{\epsilon/2})$.
Since we assumed that $m = \Omega(n \log^2 n)$, in the first iteration we have $d = \Omega(\log n)$.
This implies that in this iteration, as well as all following ones, calling \increasedegree{} will increase degrees to at least $d$, except for the vertices contained in connected components of size at most $d$.
Hence, w.h.p. each vertex either has a neighbor who is a leader, or belongs to a connected component of size at most $d$, which \increasedegree{} turns into a clique.

From the analysis of~\cite{andoniparallel} it follows that the number of iterations of the algorithm until $d = \Omega(n^{\epsilon/3})$ is $O(\log \log_{T/n} n)$.
Once $d$ reaches $\Omega(n^{\epsilon/3})$, the number of vertices in each connected component shrinks by a factor of $\Omega(n^{\epsilon/3})$ in each iteration.
	Hence, the total number of iterations is $O(\log \log_{T/n} n + 1/\epsilon)$.
Each iteration can be implemented in $O(1)$ rounds (see Lemma~\ref{lem:increase-degree}.
	%and Appendix~\ref{apx:implementation-details}).

It remains to analyze the number of queries in each round.
Clearly, it suffices to analyze the number of queries for the part of the algorithm that cannot be implemented in the \MPC{} model, which is Algorithm~\ref{alg:increase-degree}.
By Lemma~\ref{lem:increase-degree} and the fact that the sum of squared budgets is $O(T)$, we get that the total number of queries in one round is $O(T) = O(m)$.
Moreover, the maximum number of queries per machine is $O(n^{\epsilon})$ w.h.p.
\end{proof}

\section{Minimum Spanning Forest}\label{sec:msf}
In this section we describe a minimum spanning forest (MSF) algorithm
with a round-complexity of $O(\log\log_{T/n} n)$ w.h.p. in the \AMPC{}
model.  The input to the algorithm consists of an undirected weighted
graph, $G = (V, E)$.  For simplicity, we assume that all edge weights
are distinct to ensure that there is a unique MSF (the assumption can
easily be met by, for example, breaking ties by the ids of the
endpoints). The algorithm is similar in spirit to the connectivity
algorithm in Section~\ref{sec:connectivity}. In the first phase of the
algorithm, each vertex computes a subset of the minimum spanning
forest using Prim's algorithm~\cite{CLRS}. In the second phase, we
randomly sample leaders like in the connectivity algorithm, and
contract each non-leader vertex to a leader in its neighborhood.

When $m = o(n \log^2 n)$ we use the algorithm of Lemma~\ref{lem:smalltotalspace}, which shrinks the number of vertices by a factor of $\Omega(\log^2 n)$ in $O(\log\log_{T/n} n)$ rounds w.h.p.
The procedure executes a number of steps, each of which shrinks the number of vertices by a constant factor w.h.p.
We invoke each step of the procedure with the graph formed by each vertex and the lowest weight edge incident to the vertex (like in Bor\r{u}vka's
algorithm) so that the contractions performed will be along edges in
the MSF. Since the procedure decreases the number of vertices by a
factor of $\Omega(\log^2 n)$ in $O(\log\log_{T / n} n)$ rounds w.h.p.,
in what follows we assume that $m = \Omega(n\log^2 n)$.

Algorithm~\ref{alg:msf-increase-degree} gives pseudocode for the
degree increasing procedure in our minimum spanning forest algorithm.
It takes as input a graph $G$, and a degree-bound $d$, and runs Prim's
algorithm from each vertex, $v$, until the size of the MSF rooted at
$v$ is $d$.

\begin{tboxalg}{$\msfincreasedegree(G=(V, E), d)$}\label{alg:msf-increase-degree}
	\begin{enumerate}[ref={\arabic*-}, topsep=0pt,itemsep=0ex,partopsep=0ex,parsep=1ex, leftmargin=*]
    \item For each vertex, $v$, assigned to a machine, compute its
    local spanning forest, $F_{v}$, and $E(v)$ as follows:
    \begin{enumerate}
      \item Initially, $F_{v} = \{v\}, E(v) = \{\}$
      \item While $|F_{v}| < d$, select the minimum weight edge,
      $(v_1, v_2)$ s.t. $v_1 \in F_{v}$ and $v_2 \notin F_{v}$, and
      add $v_{2}$ to $F_{v}$. Add $(v_1, v_2)$ to $E(v)$. If such an
      edge does not exist, halt, since $F_v$ contains $v$'s entire
      component.
    \end{enumerate}
    \item Output $F_{v}$ and $E(v)$ for each vertex $v$.
	\end{enumerate}
\end{tboxalg}

\begin{tboxalg}{$\msf(G=(V, E))$}\label{alg:msf}
	\begin{enumerate}[ref={\arabic*-}, topsep=0pt,itemsep=0ex,partopsep=0ex,parsep=1ex, leftmargin=*]

    \item Let $E_{MSF} = \emptyset$ and $M = \{w \rightarrow e\ |\ e=(u, v, w) \in E\}$ be a
      mapping from edge weights to edges.

    \item Let $G_{C} = G, d = \sqrt{T/n}$

    \item While $G_{C}$ is non-empty, perform the following steps:

    \begin{enumerate}
      \item Randomly assign the vertices in $G_{C}$ to machines, and
        call $\msfincreasedegree{}(G_{C}, d)$ for each vertex. The
        output is a set of neighbors, $F_{v}$ and a set of minimum
        spanning forest edges, $E(v)$ for each $v \in V$.
        \label{mst:findtreeedges}

      \item $E_{MSF} = E_{MSF} \cup \{M[E(V)]\ |\ v \in V\}$.
        Note that we apply $M$ to the MSF edges found in this round in
        order to add the original edge in $G$ corresponding to some
        edge $G_{C}$.

      \item Sample each vertex with probability $\Theta(\log n
        / d)$ to be a leader. \label{mst:sampleleaders}

      \item Update $G_{C}$ to the graph formed by contracting each
        vertex $v$ to a leader in $F_{v}$. If its degree is less than
        $d$, then contract it to the neighbor in $F_{v}$ with lowest
        id. Update $M$ on the new graph.

      \item Update $d = \min(d^{1.4}, n^{\epsilon/3}$).

    \end{enumerate}
    \item Output $E_{MSF}$.
	\end{enumerate}
\end{tboxalg}

Algorithm~\ref{alg:msf} is the main MSF routine. We first decrease
the number of vertices in the graph if $m = o(n\log^2 n)$. Next, we
call Algorithm~\ref{alg:msf-increase-degree} starting with an initial
setting of $d = \Omega(\log n)$. We then perform leader selection to
reduce the number of vertices in the graph by a factor of $d$. The
decrease in the number of vertices allows us to increase $d$ to
$d^{1.4}$ in the next phase.  Therefore, after $O(\log \log_{T/n} n)$
such phases, $d$ becomes $O(n^{\epsilon/3})$, at which point, the
algorithm terminates after $O(1/\epsilon)$ rounds.

\begin{lemma}
Algorithm~\ref{alg:msf} correctly outputs the minimum spanning forest
of $G$.
\end{lemma}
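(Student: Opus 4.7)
The plan is to maintain the following inductive invariant through the main while loop of Algorithm~\ref{alg:msf}: at the start of each iteration, (i) every edge of $E_{MSF}$ lies in the unique minimum spanning forest $F^\star$ of $G$, and (ii) the pullback $M[\mathrm{MSF}(G_C)]$ equals $F^\star \setminus E_{MSF}$. The distinct-weights assumption makes $F^\star$ unique, which we rely on throughout. When the loop exits---which happens exactly when $G_C$ has no edges---part (ii) forces $E_{MSF} = F^\star$, proving correctness. The base case ($E_{MSF} = \emptyset$, $G_C = G$, $M$ the identity weight-to-edge lookup) is immediate.

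For the inductive step I would prove two claims per iteration. First, every edge in $E(v)$ produced by \msfincreasedegree{} is an MSF edge of the current $G_C$: this is the standard cut-property statement for Prim's algorithm, since each such edge is the minimum-weight edge crossing the cut $(F_v,\, V(G_C)\setminus F_v)$ at the moment it is selected, and must therefore appear in the unique MSF of $G_C$. Combined with (ii), pulling back through $M$ sends MSF edges of $G_C$ to MSF edges of $G$, so augmenting $E_{MSF}$ by $M[E(v)]$ preserves (i). Second, step (d) only ever contracts along MSF edges of $G_C$: each vertex $v$ is contracted to a target lying in $F_v$, and the $v$-to-target path in the tree $F_v$ consists solely of edges of $E(v)$, all of which were just shown to be MSF edges of $G_C$. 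Since contracting along MSF edges removes those edges (they become self-loops) but cannot destroy any other MSF edge (cycle property), and the algorithm's update to $M$ composes correctly with the contraction map, part (ii) is preserved in the new $G_C$.

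Finally, termination in $O(\log\log_{T/n} n + 1/\epsilon)$ rounds w.h.p.\ follows from the same shrinking analysis that underlies Theorem~\ref{thm:connectivity-rounds-queries}, since the leader-sampling and contraction mechanism here is structurally identical (with \msfincreasedegree{} playing the role of \increasedegree{}); once the loop exits, invariant (ii) gives $E_{MSF} = F^\star$. The main conceptual subtlety I would verify carefully is that every edge used for contraction in step (d) has already been placed into $E_{MSF}$ in step (b), so that no MSF edge is ``lost'' between rounds. This holds because the contraction target for $v$ is chosen from $F_v$, and the edges along the relevant $v$-to-target path inside $F_v$ are exactly those recorded in $E(v)$ and appended via $M$ to $E_{MSF}$ in that same iteration.
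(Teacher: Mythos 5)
Your proof is correct and follows essentially the same route as the paper's: both arguments rest on the cut property guaranteeing that the edges found by the local Prim searches belong to the unique MSF, together with the observation that contraction happens only along already-committed MSF edges and therefore (by the cycle property / acyclicity of the MSF) cannot eliminate any uncommitted MSF edge from $G_C$. The only difference is presentational — you run a forward induction with an explicit invariant on $E_{MSF}$ and $M[\mathrm{MSF}(G_C)]$, whereas the paper inducts backward from the final iteration — but the key claims verified in each step are the same.
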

\begin{proof}
  The proof is by induction on the number of iterations, basing on the
  last iteration. In the base case, note that the iteration finds all
  remaining MSF edges, since $G_{C}$ becomes empty after the final
  iteration by assumption. Since MSF edges are discovered using Prim's
  algorithm, and the edge weights are distinct, this iteration
  correctly computes the MSF of the graph.

  In the inductive case, we just need to show that we find a valid
  subset of the MSF edges in this iteration, and that any MSF edge not
  yet found is still present in $G_{C}$. Again, since we discover
  edges using Prim's algorithm and due to each edge having a unique
  edge weight, we commit a valid subset of the MSF edges in this
  iteration. To show that any MSF edge not yet found is still
  preserved in $G_{C}$, observe that an edge $(u,v)$ is only removed
  from $G_{C}$ if both $u$ and $v$ merge to the same leader.

  Suppose that an MSF edge $(u,v)$ that is not committed is removed,
  i.e., is not represented in $G_{C}$. Then by the argument above,
  $(u,v)$ merge to the same leader, and there is therefore a path $P$
  composed of MSF edges committed in this round going between $u$,
  $l(u)$ and $v$. Since we assumed that $(u,v)$ is an MSF edge, some
  edge in $P$ must have larger weight than $(u,v)$, but this is a
  contradiction, as it implies that the local searches using Prim's
  algorithm are incorrect. Thus, any removed edge cannot be an MSF
  edge, and so all MSF edges that are not found belong to
  $G_{C}$.
\end{proof}

\begin{theorem}\label{thm:msf-rounds-queries}
Algorithm~\ref{alg:msf} computes the minimum spanning forest of an
undirected graph in $O(\log \log_{T/n})$ rounds of \AMPC{} w.h.p.
where the total space $T = \Omega(m + n)$.
%\sbcomment{Can $T$ be anything? If so, I'd suggest replacing it with $m$ simply.}
%where in each round, each machine performs $O(n^{\epsilon})$ queries
%w.h.p.
\end{theorem}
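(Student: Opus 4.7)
The proof follows the same template as Theorem~\ref{thm:connectivity-rounds-queries}, with the BFS of Algorithm~\ref{alg:increase-degree} replaced by the Prim traversal of Algorithm~\ref{alg:msf-increase-degree}. First I dispatch the sparse case: when $m = o(n\log^2 n)$, I invoke the procedure of Lemma~\ref{lem:smalltotalspace}, fed with the subgraph consisting of each vertex together with its minimum-weight incident edge (as in Bor\r{u}vka). The lemma gives $O(\log \log_{T/n} n)$ rounds and leaves a graph in which every remaining non-isolated vertex represents a group of original vertices connected by MSF edges (minimum-weight incident edges are always in the MSF by the cut property), so correctness is preserved and the vertex count is reduced by a factor of $\Omega(\log^2 n)$. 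After this preprocessing I may assume $m = \Omega(n \log^2 n)$, which ensures the initial budget $d = \sqrt{T/n}$ satisfies $d = \Omega(\log n)$.

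For the main loop I verify two invariants after every phase: (i) the sum of squared budgets is $O(T)$, and (ii) the current edge set of $G_C$ together with $E_{MSF}$ contains all MSF edges of $G$ that are not yet committed. Invariant (i) holds at the start since $n \cdot d^2 = T$, and each phase scales the number of vertices by $O(\log n / d)$ while squaring the budget scaling gives factor $d^{0.8}$, so the sum stays $O(T)$. Invariant (ii) is exactly the claim proved in the preceding correctness lemma: Prim's expansion from every vertex only commits edges that are minimum over a cut, and the contraction step only removes an edge $(u,v)$ if $u$ and $v$ end up at the same leader via MSF edges already committed.

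Next I bound the communication per phase. Prim's traversal from $v$ grows the tree $F_v$ by repeatedly picking the lightest crossing edge, so it inspects at most $O(d)$ incident edges per growth step and terminates after at most $d$ steps or when $v$'s entire component is absorbed; hence it makes $O(d^2)$ queries, directly analogous to the BFS bound of Lemma~\ref{lem:increase-degree}. With $O(n^{1-\epsilon} d^2)$ machines and vertices hashed to machines uniformly at random, the same Chernoff-plus-union-bound argument as in Lemma~\ref{lem:increase-degree} shows that each machine issues $O(n^\epsilon)$ queries w.h.p. Leader sampling with probability $\Theta(\log n / d)$ succeeds w.h.p.\ for every vertex because $|F_v| \geq d$ (unless $v$'s whole component fits in $F_v$, in which case contraction to the lowest-id in-component neighbor collapses the component without loss), so a Chernoff bound over the $d$ candidates in $F_v$ ensures at least one leader is present.

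Finally I count rounds. Each phase is $O(1)$ \AMPC{} rounds (one round of $\msfincreasedegree$, one round to gather and commit edges, one round to contract and rebuild $M$). Using the recurrence of Andoni \etal{}, it takes $O(\log \log_{T/n} n)$ phases before the budget reaches $n^{\epsilon/3}$, and after that the vertex count shrinks by a factor of $\Omega(n^{\epsilon/3})$ per phase, finishing within $O(1/\epsilon)$ additional phases. The total round count is therefore $O(\log \log_{T/n} n + 1/\epsilon) = O(\log \log_{T/n} n)$ for constant $\epsilon$. The one step that needs care is invariant~(ii) under the Prim-based contraction: the possible subtlety is that a non-leader $v$ with $|F_v| < d$ might contract across a non-MSF edge, but since in that case $F_v$ is $v$'s whole remaining component, every edge of $F_v$ is in the MSF, and any non-MSF edge incident to $v$ is only removed when its two endpoints end up at the same leader via MSF edges of strictly smaller weight, which is exactly the cycle property.
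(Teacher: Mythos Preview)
Your proposal is correct and follows exactly the paper's approach: the paper's own proof is the single sentence ``The proof follows by the same argument as the proof of Theorem~\ref{thm:connectivity-rounds-queries},'' and you have faithfully unrolled that argument with Prim's traversal substituted for BFS. One small imprecision: the claim that Prim ``inspects at most $O(d)$ incident edges per growth step'' is not literally true for an individual step (a step may need to scan past many stale inside edges), but the conclusion of $O(d^2)$ total queries is correct by the same amortization as in Lemma~\ref{lem:increase-degree}, provided adjacency lists are pre-sorted by weight so each edge between two vertices of $F_v$ is scanned at most twice.
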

\begin{proof}
  The proof follows by the same argument as the proof of
  Theorem~\ref{thm:connectivity-rounds-queries}.
\end{proof}

Since we can compute a spanning forest of an undirected graph by
assigning arbitrary distinct weights to the edges, we have the
following corollary:

\begin{corollary}\label{cor:spanningforest-rounds-queries}
A spanning forest of an undirected graph can be found in $O(\log
\log_{T/n})$ rounds of \AMPC{} w.h.p.
where the total space $T = \Omega(m + n)$.
%where in each round, each
%machine performs $O(n^{\epsilon})$ queries w.h.p.
\end{corollary}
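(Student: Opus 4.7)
The plan is to reduce the unweighted spanning forest problem to the minimum spanning forest problem handled by Theorem~\ref{thm:msf-rounds-queries}. Because the output of a spanning forest algorithm is unconstrained among all spanning forests, we are free to impose any weight function we like on the edges; as soon as the weights are all distinct, the unique MSF is a valid spanning forest.

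First I would, in a single round of \AMPC{}, have each machine that holds a copy of an edge $(u,v)$ locally compute a weight for it. A convenient choice is $w(u,v) = \min(u,v) \cdot n + \max(u,v)$, which is distinct across edges, fits in $O(1)$ machine words (since vertex IDs do), and can be evaluated with no additional queries to the DDS. This produces a weighted graph $G' = (V, E, w)$ with the same underlying topology and total size $\Theta(n+m)$.

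Next I would run Algorithm~\ref{alg:msf} on $G'$. By Theorem~\ref{thm:msf-rounds-queries}, this computes the MSF of $G'$ in $O(\log \log_{T/n} n)$ rounds of \AMPC{} w.h.p., using $O(n^\epsilon)$ space per machine and $O(m+n)$ total space. Since the MSF of any weighted graph is a spanning forest of the underlying unweighted graph, the returned edge set $E_{MSF}$ is a spanning forest of $G$.

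There is essentially no hard step here: the only thing to verify is that the weight assignment respects the \AMPC{} resource bounds, which it does because it is purely local and introduces no extra queries, writes, or space beyond what is already required to represent an edge. Combining the weight assignment round with the round complexity of Theorem~\ref{thm:msf-rounds-queries} yields the claimed $O(\log \log_{T/n} n)$ bound and establishes the corollary.
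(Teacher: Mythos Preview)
The proposal is correct and takes essentially the same approach as the paper: both reduce spanning forest to minimum spanning forest by assigning distinct weights to the edges and invoking Theorem~\ref{thm:msf-rounds-queries}. Your proposal is slightly more explicit in specifying a concrete weight function and checking the resource bounds, but the idea is identical.
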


\section{Forest Connectivity}\label{sec:forest-connectivity}

In this section, we present algorithms for solving \emph{forest
  connectivity}, i.e., the undirected graph connectivity problem on
forests in the \AMPC{} model. %We start with this problem since forest connectivity can be used as a primitive to simplify our other algorithms.
Our forest connectivity algorithm is based on the classic
technique of transforming each forest into a cycle, via an Eulerian tour. After representing the forest as a set of Eulerian tours,
finding the connectivity of the forest reduces to solving connectivity
on a collection of cycles.
Observing that the PRAM construction of Tarjan~\etal{}~\cite{tarjan1985efficient} can be implemented in \AMPC{} implies that this reduction to
the cycle connectivity problem can be done in $O(1)$ rounds of
\AMPC{}.

Next, we address the cycle connectivity problem.

\myparagraph{Cycle connectivity.}
The \emph{cycle connectivity problem} is to compute the connected
components of a graph $G=(V, E)$ which contains a set of disjoint
cycles.
Our cycle connectivity algorithm, Algorithm~\ref{alg:cycleconn}, works
as follows. We call $O(1/\epsilon)$ iterations of \shrink{} (the same
algorithm used in the \twocycle{} problem) which reduces the size of
the largest cycle to $O(n^{\epsilon})$ with high probability. Then, we switch
to a different algorithm which computes the connected components of
the remaining cycles in one round using a total of $O(n\log n)$
communication with high probability. The algorithm first fixes a
permutation $\pi$. Then, each vertex in the graph searches along
%\klcomment{isn't one direction sufficient?} \laxman{I think so.
% Changing to one direction.}
one direction of the cycle until it either returns to itself or hits a
vertex that appears before it in the permutation (i.e., has higher rank in
$\pi$).

\begin{tboxalg}{$\cycleconn(G=(V, E))$}\label{alg:cycleconn}
	\begin{enumerate}[ref={\arabic*-}, topsep=0pt,itemsep=0ex,partopsep=0ex,parsep=1ex, leftmargin=*]
    \item Let $G' := \shrink{}(G, \epsilon, O(1/\epsilon))$

    \item Fix a random permutation, $\pi$, over the vertices of $G'$
      and randomly assign the vertices to the machines.

    \item \label{step:search} For each vertex $u$ assigned to a
      machine, search one direction of the cycle until either $u$
      fully traverses the cycle or $u$ encounters a vertex $v$ s.t.
      $\pi(v) < \pi(u)$. The vertex with lowest rank in $\pi$ in a
      cycle is the representative for the cycle.
	\end{enumerate}
\end{tboxalg}

We obtain the following corollary by applying Lemma~\ref{lem:iteration-shrink}.
\begin{corollary}\label{cor:largestcycle}
  After $O(1/\epsilon)$ iterations of \shrink{} with $\delta =
  \epsilon/2$, the largest cycle in the graph has size
  $O(n^{\epsilon/2})$ with high probability.
\end{corollary}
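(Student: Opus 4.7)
The plan is to derive the corollary by re-running the analysis of Lemma~\ref{lem:iteration-shrink} with the sampling parameter $\delta = \epsilon/2$ in place of $\epsilon$, and then iterating. The key observation is that the lemma's proof does not require $\delta = \epsilon$ intrinsically; it only uses the fact that for a cycle of size $k$ and sample probability $p = n^{-\delta/2}$, the expected number of samples is $kp = \Omega(\log n)$, which permits a Chernoff bound that concentrates the shrink factor around $1/p = n^{\delta/2}$.

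First, I would replay the argument of Lemma~\ref{lem:iteration-shrink} almost verbatim with $\delta = \epsilon/2$. For any cycle of size $k = \Omega(n^{\epsilon/2})$ present at the start of an iteration, the number of sampled vertices has expectation $k / n^{\epsilon/4} = \Omega(n^{\epsilon/4}) = \Omega(\log n)$, so by Chernoff we lose only a constant factor to the expectation w.h.p., and the resulting cycle has length $O(k / n^{\epsilon/4})$. Thus a single iteration contracts any such cycle by a factor of $n^{\epsilon/4}$ with high probability.

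Next, I would induct on the iteration count. Starting from any cycle of length at most $n$, after $i$ iterations its length is at most $n^{1 - i\epsilon/4}$ w.h.p., provided it has not already dropped below $O(n^{\epsilon/2})$. Solving $1 - i\epsilon/4 \leq \epsilon/2$ gives $i \geq 4/\epsilon - 2$, so $i = O(1/\epsilon)$ iterations suffice to guarantee the bound. A union bound over the $O(1/\epsilon)$ iterations and over the at most $n$ cycles keeps the total failure probability polynomially small, which is the desired with-high-probability guarantee.

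There is no real obstacle here; the corollary is essentially a bookkeeping exercise built directly on Lemma~\ref{lem:iteration-shrink}. The only thing to double-check is that the Chernoff bound still yields high-probability concentration once $k$ drops close to the threshold $n^{\epsilon/2}$: as long as $k = \Omega(n^{\epsilon/2})$ we have $E[X] = \Omega(n^{\epsilon/4}) \gg \log n$, so the bound continues to apply until the cycle enters the target regime, at which point we simply stop tracking it.
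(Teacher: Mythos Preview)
Your proposal is correct and matches the paper's approach: the paper simply states that the corollary follows ``by applying Lemma~\ref{lem:iteration-shrink}'' without further detail, and you have spelled out precisely how---re-running the Chernoff argument with sample probability $n^{-\epsilon/4}$ so that any cycle of size $\Omega(n^{\epsilon/2})$ shrinks by a factor of $n^{\epsilon/4}$ per iteration, then iterating $O(1/\epsilon)$ times. This is exactly the intended derivation (it is also the content of Lemma~\ref{lem:twocyclerounds} with $\epsilon$ replaced by $\epsilon/2$), so nothing more is needed.
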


\begin{lemma}\label{lem:cyclequeryexpec}
Consider a cycle of length $k = \Omega(\log n)$. The number of queries
made by a vertex until it hits a vertex with higher rank in $\pi$ is
$O(\log k)$ in expectation.
\end{lemma}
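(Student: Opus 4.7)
The plan is to bound the tail probabilities of the query count and then sum, using the identity $E[T] = \sum_{t \ge 0} \Pr[T > t]$.

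Let $u$ be the starting vertex and let $v_0 = u, v_1, v_2, \ldots, v_{k-1}$ be the vertices encountered along the chosen direction of the cycle. Let $T$ denote the number of queries made before the traversal terminates, i.e., the smallest $t \ge 1$ such that $\pi(v_t) < \pi(u)$ (and $T = k$ if no such $t$ exists, meaning $u$ has the smallest $\pi$ on the cycle and the traversal goes all the way around).

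The key observation is a symmetry argument: for any $t$ with $0 \le t \le k-1$, the event $\{T > t\}$ is precisely the event that $\pi(u)$ is strictly smaller than each of $\pi(v_1), \ldots, \pi(v_t)$, i.e., $\pi(u) = \min\{\pi(v_0), \pi(v_1), \ldots, \pi(v_t)\}$. Because $\pi$ is a uniformly random permutation, each of the $t+1$ distinct vertices $v_0, \ldots, v_t$ is equally likely to attain the minimum among this set, so
\[
\Pr[T > t] \;=\; \frac{1}{t+1}.
\]

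Plugging this into the standard tail-sum formula gives
\[
E[T] \;=\; \sum_{t=0}^{k-1} \Pr[T > t] \;=\; \sum_{t=0}^{k-1} \frac{1}{t+1} \;=\; H_k \;=\; O(\log k),
\]
which is the claimed bound. The assumption $k = \Omega(\log n)$ is not needed for the expectation bound itself; presumably it is carried in because the authors will later need a concentration statement (e.g., a Chernoff-style bound showing that with high probability the per-vertex query cost is $O(\log k)$, so that the union bound over all $n$ vertices establishes the $O(n \log n)$ total communication claimed just before the lemma).

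The ``proof'' is thus essentially one line once the symmetry is noticed; the only mild subtlety is handling the wrap-around case $T = k$, which is absorbed automatically because the sum truncates at $t = k-1$ (we cannot make more than $k$ queries on a cycle of length $k$). I do not foresee any substantial obstacle.
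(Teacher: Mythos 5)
Your proposal is correct and uses essentially the same argument as the paper: both hinge on the symmetry that the starting vertex is the minimum of the first $t+1$ vertices with probability $\frac{1}{t+1}$, the paper summing $i\cdot\Pr[T=i]$ with $\Pr[T=i]=\frac{1}{i(i+1)}$ while you use the equivalent tail-sum $\sum_t \Pr[T>t]$. Both yield the harmonic-number bound $O(\log k)$, and your handling of the wrap-around case is fine.
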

\begin{proof}
  Given an arbitrary vertex $v$, the probability that we have to query
  $i$ vertices before finding a vertex is equal to the probability that the
  $i-1$ vertices following $v$ in the cycle have lower rank than
  $v$ in the permutation, and that the $i$th vertex has higher rank
  than $v$ in the permutation. This probability is $\frac{1}{i} \cdot
  \frac{1}{i+1}$. The expected number of queries is therefore
  \[
    \sum_{i=1}^{k} i \cdot \frac{1}{i} \cdot \frac{1}{i+1} =
    \sum_{i=1}^{k} \frac{1}{i+1} = H_k - 1 \in O(\log k),
  \]
  completing the proof.
\end{proof}

Next, we show that the total number of queries for large enough
cycles is concentrated.
\begin{lemma}\label{lem:cycleconnquicksort}
In Algorithm~\ref{alg:cycleconn}, the total number of queries of a
cycle of length $k = \Omega(\log n)$ is $O(k \log k)$ with high
probability.
\end{lemma}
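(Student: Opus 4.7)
The plan is to express the total query count as a sum over vertices of the number of searches reaching each vertex, and then to show that each such per-vertex count is distributed as a sum of independent Bernoullis for which Chernoff gives sharp concentration.

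Concretely, let $Q_u$ denote the number of vertices $v$ whose search in Step~\ref{step:search} reaches $u$, so that the total query count equals $\sum_u Q_u$. By the stopping rule of the search, $v$'s search reaches $u$ if and only if every vertex in the arc strictly between $v$ and $u$ (in the search direction) has $\pi$-value greater than $\pi(v)$; equivalently, $v$ has the smallest $\pi$-value among all vertices from $v$ up to (but not including) $u$. Reading the $k-1$ other vertices in reverse of the search direction starting from $u$'s predecessor, the event ``$v$'s search reaches $u$'' becomes exactly the event that $v$ appears as a left-to-right minimum in this reversed sequence. By R\'enyi's classical theorem on random permutations, the left-to-right minima indicators in a uniformly random permutation of $k-1$ values are mutually independent Bernoullis, with position $i$ a running minimum with probability $1/i$. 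Therefore $Q_u$ is distributed as a sum of independent Bernoullis with mean $H_{k-1} = \Theta(\log k)$.

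With this in hand, a standard Chernoff bound yields $P(Q_u \geq C \log k) \leq k^{-\Omega(C)}$ for constants $C$; taking $C$ sufficiently large (depending on the implicit constant in $k = \Omega(\log n)$) makes this at most $n^{-c}$ for any desired $c$. A union bound over the $k$ choices of $u$ then shows $\max_u Q_u = O(\log k)$ w.h.p., and consequently $\sum_u Q_u = O(k \log k)$ w.h.p.

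The main obstacle is calibrating the Chernoff bound so that its failure probability is $\poly(n)$-small when $k$ may be as small as $\Omega(\log n)$: this forces the implicit constant in $k = \Omega(\log n)$ to be taken sufficiently large, and may require tuning the exponent. The key enabling step is R\'enyi's independence theorem; without it, the correlations between different ``next-smaller-element'' searches make such tight concentration significantly harder to establish (e.g., a direct variance or QuickSort-style argument only yields a noticeably weaker tail bound).
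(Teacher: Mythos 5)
Your combinatorial setup is correct and is a genuinely different decomposition from the paper's: the paper views the process through the randomized-quicksort recursion (the highest-priority vertex does $k$ work and splits the cycle into independent subproblems), whereas you sum over \emph{targets} $u$ the number $Q_u$ of searches reaching $u$, identify $Q_u$ with the number of left-to-right minima of a uniformly random sequence of length $k-1$, and invoke R\'enyi's independence theorem. That identification is right, and it cleanly recovers the expectation $kH_{k-1}=\Theta(k\log k)$.

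The gap is in the last step, and it is exactly the "obstacle" you flag — but your proposed remedy does not work. For a sum of independent Bernoullis with mean $\mu=H_{k-1}=\Theta(\log k)$, the optimized Chernoff bound gives $P(Q_u\ge C\mu)\le \exp\bigl(-\Theta(C\log C)\cdot\log k\bigr)=k^{-\Theta(C\log C)}$, and this is tight up to lower-order terms (the upper tail is Poisson-like). When $k=\Theta(\log n)$ — a regime the lemma explicitly covers — this is $(\log n)^{-O(1)}$ for any constant $C$, which is never $n^{-c}$; enlarging the hidden constant in $k=\Omega(\log n)$ only changes $\log k$ by an additive constant and so cannot help. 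Indeed the intermediate claim "$\max_u Q_u=O(\log k)$ with probability $1-n^{-c}$" is simply false for $k=\Theta(\log n)$. Pushing the threshold up to $C\log n$ does restore the $n^{-c}$ failure probability, but then the bound $k\cdot\max_u Q_u$ only yields $O(k\log n)$, not $O(k\log k)$. The underlying problem is that bounding the sum by $k$ times the maximum is lossy: what is needed is concentration of $\sum_u Q_u$ around its mean $kH_{k-1}=\Omega(k)=\Omega(\log n)$, where a deviation bound of the form $e^{-\Omega(k)}$ would suffice. You cannot get that by applying Chernoff to the grand family of indicators either, since indicators with the same searcher and different targets are positively correlated ($v$ reaching $u'$ implies $v$ reaching every vertex between $v$ and $u'$), so they are neither independent nor negatively associated across targets. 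Some global argument exploiting the recursive independent-subproblem structure (which is what the paper's quicksort reduction is gesturing at) appears necessary; your per-target independence alone does not substitute for it.
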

\begin{proof}
%Since each cycle is independent, let $\pi$ be drawn from $[k]$.
We observe that the amount of work done by the queries is identical to
the work done by pivots selected in the randomized quicksort
algorithm. To make the analogy concrete, consider first picking the
highest priority element in $\pi$, and charge it $k$ work for the
queries it performs.  Now, consider the second highest priority
element, which takes one of the remaining $k-1$ slots. This element
may perform $k-1$ queries in the worst case, but it breaks the
remaining array into two independent subproblems which cannot query
each other. The query complexity follows from the analysis of
randomized quicksort (see for example~\cite{motwani1995randomized}.).
\end{proof}

Next we bound the total number of queries of a machine. Since our
bound on the total query complexity $O(n\log n)$ w.h.p., we use
$O(n\log n)$ total space in order to perform the extra queries.
Therefore, in what follows, we assume that we have $O(n^{1-\epsilon}\log n)$ machines.
The proof boils down to analyzing a weighted version of
the balls and bins problem, when we have a upper-bound on the maximum
weight of any ball. Although a very similar problem is considered
in~\cite{sanders1996competitive}, it does not provide a high
probability bounds on the maximum load of a bin.

\begin{lemma}\label{lem:cycleconnwhp}
The total amount of queries performed by a machine in
Algorithm~\ref{alg:cycleconn} is $O(n^{\epsilon})$ w.h.p.
\end{lemma}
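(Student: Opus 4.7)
The plan is to bound the load on a single machine by a weighted Chernoff argument over the random assignment of vertices to machines, after first conditioning on a good event for the random permutation $\pi$.

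First, I would establish a good event $\mathcal{E}$ that (i) every remaining cycle has length at most $O(n^{\epsilon/2})$, and (ii) the sum $W = \sum_{v} X_v$ of query counts (where $X_v$ is the number of queries made by vertex $v$ in Step~\ref{step:search}) is $O(n \log n)$. Part (i) holds w.h.p.\ by Corollary~\ref{cor:largestcycle}. For part (ii), I would apply Lemma~\ref{lem:cycleconnquicksort} to each cycle of length $k = \Omega(\log n)$ and union bound over the at most $n$ cycles; for the remaining cycles of length $k = O(\log n)$, each vertex trivially performs at most $k$ queries, contributing at most $k^2 = O(k \log n)$ per cycle. Summing over all cycles (whose lengths total $n$) gives $W = O(n \log n)$ w.h.p., so $\Pr[\mathcal{E}] \geq 1 - 1/n^c$ for any constant $c$.

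Next, I would condition on $\mathcal{E}$ and on $\pi$, which fixes the deterministic quantities $X_v$. Fix a machine $i$ and let $Y_v = X_v$ if $v$ is assigned to machine $i$ and $0$ otherwise. Because the assignment is independent of $\pi$, the $Y_v$ are independent across $v$, each bounded by the maximum cycle length $M = O(n^{\epsilon/2})$, and
\[
\mu := \E[L_i \mid \pi, \mathcal{E}] = \frac{W}{P} = O\!\left(\frac{n \log n}{n^{1-\epsilon}\log n}\right) = O(n^{\epsilon}),
\]
where $L_i = \sum_v Y_v$. The multiplicative Chernoff bound for bounded, independent summands then yields
\[
\Pr\!\left[L_i \geq c' \mu \,\middle|\, \pi, \mathcal{E}\right] \leq \exp\!\left(-\Omega(\mu / M)\right) = \exp\!\left(-\Omega(n^{\epsilon/2})\right)
\]
for a suitable constant $c'$. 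A union bound over the $P = O(n^{1-\epsilon}\log n) \leq n$ machines keeps the failure probability super-polynomially small, and combining with the bound on $\Pr[\neg \mathcal{E}]$ gives the claim.

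The main obstacle is handling the \emph{weighted} balls-and-bins setting: unlike the classical unweighted case, one must ensure that the maximum per-vertex query count is small enough to make the Chernoff tail tight at the $O(n^{\epsilon})$ scale. This is precisely the role of the preliminary \shrink{} phase in Algorithm~\ref{alg:cycleconn}: by first shrinking every cycle to length $O(n^{\epsilon/2})$, we obtain a maximum weight $M$ that is a factor of $n^{\epsilon/2}$ below the expected load $\mu$, which is exactly enough for the Chernoff exponent $\mu/M = n^{\epsilon/2}$ to beat the union bound over machines. Without this preprocessing, a single long cycle could force some machine to perform $\omega(n^{\epsilon})$ queries.
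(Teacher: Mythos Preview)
Your approach is essentially the paper's: condition on the permutation to fix the per-vertex query counts, then apply concentration over the random vertex-to-machine assignment, exploiting that the total weight is $O(n\log n)$ (Lemma~\ref{lem:cycleconnquicksort}) and the maximum weight is $O(n^{\epsilon/2})$ (Corollary~\ref{cor:largestcycle}). The only substantive difference is the concentration tool: the paper invokes the Hoeffding--Bernstein inequality for sampling without replacement, whereas you use a Chernoff-type bound under independent assignment. Both yield the same $\exp(-\Omega(n^{\epsilon/2}))$ tail, and your modeling of the assignment as independent is arguably the more natural one here.

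Two small things to tighten. First, you should also cover Step~1 of Algorithm~\ref{alg:cycleconn}: the queries made during \shrink{} are $O(n^{\epsilon})$ per machine by Lemma~\ref{lem:twocyclecomm}, which the paper dispatches in one line. Second, your Chernoff step as written needs $\mu/M = \Omega(n^{\epsilon/2})$, i.e., $\mu = \Omega(n^{\epsilon})$, but you only argued $\mu = O(n^{\epsilon})$. This is harmless: set the deviation threshold to $n^{\epsilon}$ rather than $c'\mu$ and use a Bernstein-type bound, which gives $\Pr[L_i \geq \mu + n^{\epsilon}] \leq \exp\bigl(-\Omega(n^{2\epsilon}/(M\mu + Mn^{\epsilon}))\bigr) = \exp(-\Omega(n^{\epsilon/2}))$ regardless of how small $\mu$ actually is.
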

\begin{proof}
The number of queries performed by \shrink{} is at most
$O(n^{\epsilon})$ per round by Lemma~\ref{lem:twocyclecomm}. We now
show that the total queries performed per machine in step (3) when
traversing the cycles is $O(n^{\epsilon})$ w.h.p.

To see this, we apply a lemma concerning the mean of a sample of size
$k$ from a universe of size $n$ without replacement. Given a set of
real numbers $c_1, \ldots, c_n$ and the sample $X_1, \ldots, X_k$,
define $X = \sum_{i=1}^{k} X_i$, $\bar{c}_{n} = \frac{1}{n}
\sum_{i=1}^{n} c_i$, $\sigma^{2}_{n} = \frac{1}{n} \sum_{i=1}^{n} (c_i
- \bar{c}_{n})^{2}$ and lastly $\Delta_{n} = \max_{i=1}^{n} c_i -
\min_{i=1}^{n} c_i$.
The Hoeffding-Bernstein bound (see Lemma 2.14.19
from~\cite{van1997weak}) states that
\[
P(|X - \bar{c}_{n}| > t) \leq 2 \exp \bigg(- \frac{kt^{2}}{2\sigma^{2}_{n} + t \Delta_{n}} \bigg)
\]
Taking $c_i$ to be the query complexity of the $i$th vertex, we now
compute the terms defined above.
Notice that by
Lemma~\ref{lem:cycleconnquicksort} we have $\sum_{i=1}^{n} c_i =
O(n\log n)$ w.h.p., and so $\bar{c}_{n} = O(\log n)$, with high
probability.
Furthermore, by Corollary~\ref{cor:largestcycle} we have $c_i < n^{\epsilon/2}$, with high probability.
Therefore, we have $\forall i, (c_i - \bar{c}_{n})^{2} \leq n^{\epsilon}$,
and $\sigma^{2}_{n} \leq n^{\epsilon}$. Lastly, we have $\Delta_{n} \leq
n^{\epsilon/2}$. Letting $t = c \cdot n^{\epsilon}$, we have
\begin{align*}
P(|X - \bar{c}_{n}| > c n^{\epsilon})
    &\leq 2 \exp \bigg(- \frac{c^{2}n^{3\epsilon}}{2n^{\epsilon} + cn^{3\epsilon/2}} \bigg)\\
    &\leq 2 \exp(-cn^{\epsilon}) \\
    &\leq 2 n^{-c},
\end{align*}
completing the proof.
\end{proof}

Putting the lemmas above together, we have the following theorem:
\begin{theorem}\label{thm:forestconn}
  There exists an \AMPC{} algorithm that solves the forest
  connectivity problem in $O(1/\epsilon)$ rounds of computation w.h.p.
  using $T = O(n\log n)$ total space w.h.p.
%  where
%  each machine uses $O(n^{\epsilon}\log n)$ queries per round.
\end{theorem}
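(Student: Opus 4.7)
The plan is to assemble the theorem by stringing together the two reductions laid out in the section: first reduce forest connectivity to cycle connectivity using Euler tours, and then apply Algorithm~\ref{alg:cycleconn} to the resulting cycle graph. The text already asserts the Euler tour step takes $O(1)$ \AMPC{} rounds via an adaptation of the Tarjan et al.\ PRAM construction, so I would cite that reduction, noting that it produces a collection of disjoint cycles on $O(n)$ vertices in which two original vertices lie in the same tree if and only if their corresponding tour vertices lie on the same cycle. Using this, finding forest components reduces to finding a representative per cycle.

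Then I would run Algorithm~\ref{alg:cycleconn} on the cycle graph. The round count follows by adding the contributions of its two phases. By Corollary~\ref{cor:largestcycle}, $O(1/\epsilon)$ iterations of \shrink{} with $\delta=\epsilon/2$ suffice to reduce the length of every surviving cycle to $O(n^{\epsilon/2})$ w.h.p.; each iteration is one \AMPC{} round. The final search round (Step~\ref{step:search}) is a single \AMPC{} round, since each machine traverses its assigned vertices' cycles in one adaptive round. Combined with the $O(1)$ Euler tour construction rounds, the total round complexity is $O(1/\epsilon)$.

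For the space and per-machine communication bounds I would invoke the three query lemmas already established. During the \shrink{} phase, Lemma~\ref{lem:twocyclecomm} bounds the total communication per machine by $O(n^\epsilon)$ w.h.p.\ per round, and the total space used is $O(n)$. For the search round, Lemma~\ref{lem:cycleconnquicksort} gives $O(k \log k)$ total queries per cycle of length $k = \Omega(\log n)$ w.h.p.; summing over cycles (and handling short cycles trivially since they fit on one machine) yields $O(n \log n)$ total queries, which matches the claimed total space bound $T = O(n \log n)$. Lemma~\ref{lem:cycleconnwhp} then upgrades this from a total bound to a per-machine bound of $O(n^\epsilon)$ w.h.p., using the Hoeffding--Bernstein inequality together with the fact that the largest per-vertex query cost is at most $O(n^{\epsilon/2})$ after the shrinking phase.

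The most delicate step I expect is making sure that the intermediate union bounds line up. The shrinking step is "w.h.p. at each round", the quicksort-style per-cycle query bound is "w.h.p.\ per cycle", and the Hoeffding--Bernstein concentration is conditioned on both previous events holding. I would verify that a union bound over $O(1/\epsilon) = O(1)$ shrink rounds, over at most $n$ cycles, and over the $O(n^{1-\epsilon} \log n)$ machines still leaves a $1 - n^{-\Omega(1)}$ success probability (possible by choosing the hidden Chernoff constants appropriately). Combining everything proves the stated round, space, and per-machine complexity claims.
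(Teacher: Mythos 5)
Your proposal is correct and follows essentially the same route as the paper, which itself just states the theorem as "putting the lemmas above together": the Euler-tour reduction to cycle connectivity in $O(1)$ rounds, Corollary~\ref{cor:largestcycle} for the $O(1/\epsilon)$ shrink iterations, and Lemmas~\ref{lem:twocyclecomm}, \ref{lem:cycleconnquicksort}, and~\ref{lem:cycleconnwhp} for the total-space and per-machine query bounds. Your added attention to how the union bounds over rounds, cycles, and machines compose is a reasonable (and slightly more careful) filling-in of details the paper leaves implicit.
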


\subsection{Rooting Trees and List Ranking}
We now consider the problem of computing a rooted tree given a set of
edges representing a tree. We also present algorithms for computing
tree properties, such as subtree sizes, preorder numbering, and
subtree minima/maxima in this section. These algorithms are used in
our 2-edge connectivity algorithm.

We start with the tree rooting problem. The input is a set of tree
edges, and a vertex $r$ to root the tree at. The output is a mapping
$p : V \rightarrow V$ from vertices in the tree to their
\emph{parents} where $p(r) = r$. A classic parallel algorithm for
rooting a tree in the is based on the Euler tour
technique~\cite{tarjan1985efficient}, which works as follows. First
compute an Euler tour of the tree. Rooting the tree reduces to the
\emph{list ranking} problem, a classic problem in the parallel
algorithms literature. We now show that a simple extension of our
algorithm for the \twocycle{} problem can be used to perform list
ranking in the \AMPC{} model.

The algorithm works as follows. It first computes the Euler tour of
the tree, and then breaks it at an edge incident to the root, $r$,
which unravels the cycle into a list. It then uses list ranking to
compute the distance between each edge and the root of the list. Since
each edge appears twice in the Euler tour, it also maintain
cross-pointers between these edges.
%The idea is now to determine
%whether an occurrence of a tree edge $(u,v)$ is either \emph{forward},
%i.e., going from a parent to a child in the rooted tree, or
%\emph{reverse}, i.e., going from a child to a parent.
%Detecting
%forward and reverse edges  since the forward occurrence of an edge has
%lower rank in the list than the reverse occurrence of the edge.

\begin{tboxalg}{$\listranking(L=\{v_0, \ldots,
    v_N\})$}\label{alg:list-ranking}
	\begin{enumerate}[ref={\arabic*}, topsep=0pt,itemsep=0ex,partopsep=0ex,parsep=1ex, leftmargin=*]
    \item Let $n = N$ and $r = 1$. Let $w_{r}(v)$ be the weight of
      vertex $v$ on the $r$th round. Initially, $w_{1}(v) = 1$.

    \item While $n = \Omega(N^{\epsilon})$, run the following
      steps.

      \label{lc:contractionloop}
			\begin{enumerate}[ref={(\alph*)}, topsep=5pt,itemsep=0ex]

        \item Sample each list element with probability
          $N^{-\delta/2}$ uniformly at random, and let $S$ be the set
          of samples, including $v_0$. Randomly distribute the samples
          to the machines.\label{lc:samplestep}

        \item For each sampled vertex, $v$, traverse the list until
          another sampled vertex is hit and let $l(v)$ be number of
          vertices traversed before hitting the next sample. Update
          $w_{r}(v)$ to be $w_{r-1}(v) + l(v)$.

        \item Contract the graph to the samples, and update $r = r + 1$.
      \end{enumerate}
    \item Solve the remaining weighted list-ranking problem on a
      single machine using $w_{r}$. The output is the correct rank of
      all vertices that survived all $r$ sampling rounds. Let
      $d_{r}(v) = w_{r}(v)$ for vertices that are active on the $r$th
      round and $0$ otherwise.

    \item For $r' = r-1 \ldots 0$, run the following steps:
			\begin{enumerate}[ref={(\alph*)}, topsep=5pt,itemsep=0ex]
        \item Let $v$ be a vertex sampled on round $r$ in
          Step~\ref{lc:samplestep}. Traverse the list until the next
          sampled vertex is hit, and update $d_{r'}(u_i) = d_{r'+1}(v) +
          \emph{dist}(v, u_i)$ for all vertices $v, u_1, \ldots,
          u_{l(v)}$.
      \end{enumerate}
    \item Output $d_{0}$.
  \end{enumerate}
\end{tboxalg}

\begin{lemma}\label{lem:listrankingcorrect}
Algorithm~\ref{alg:list-ranking} correctly solves the list ranking
problem.
\end{lemma}
\begin{proof}
First note that the contraction process of Algorithm~\ref{alg:list-ranking} does not change the order of the vertices. Moreover, for each vertex $v$, each vertex $u$ that is between $v$ and the root may only contract to another vertex in this range. Thus, if after a round a vertex $v$ survives, the total weight of the vertices between $v$ and the root remains the same. Next we use this observation to prove the statement of the lemma.

We prove this by an induction on the number of rounds, basing on the last round. For the base case, note that in after all contractions, we explicitly calculate the total weight of the vertices between each survived vertex and the root. As we mentioned before this is exactly the index of the survived vertex in the initial list. Next, assume that all the indices of the vertices in round $r'+1$ are correctly set. Note that these are the sampled vertices in round $r'$. Hence, to find the index of a (not sampled) vertex $v$ in round $r'$ we just need to sum up the the index of its next sampled vertex with the total weight of the vertices between $v$ and the sample vertex, as mentioned in the algorithm. This proves the induction step and completes the proof of the lemma.
\end{proof}

\begin{theorem}\label{thm:listranking}
  List ranking a list of length $N$ can be performed in
  $O(1/\epsilon)$ rounds of $\AMPC{}$ w.h.p. Furthermore, the
  algorithm uses $O(N)$ total space w.h.p.
%  Furthermore, the algorithm uses $O(n^{\epsilon})$ queries per
%  machine w.h.p.
\end{theorem}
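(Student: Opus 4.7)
The plan is to mirror the analysis already carried out for the \twocycle{} algorithm, since Algorithm~\ref{alg:list-ranking} is structurally the same sample-and-contract procedure applied to a list rather than a cycle, augmented with a backward propagation pass. Correctness is already handled by Lemma~\ref{lem:listrankingcorrect}, so I only need to bound the round complexity, the per-machine query count, and the total space.

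For the round complexity, I would first show that each iteration of the contraction loop at Step~\ref{lc:contractionloop} shrinks every sub-list of length $k = \Omega(N^\epsilon)$ by a factor of $N^{\epsilon/2}$ w.h.p. This is the list analogue of Lemma~\ref{lem:iteration-shrink}: the number of sampled elements in a segment of length $k$ is a sum of $k$ independent Bernoullis with mean $k \cdot N^{-\epsilon/2} = \Omega(\log N)$, so a Chernoff bound gives tight concentration and a union bound over all segments preserves the high-probability guarantee. Iterating, the list length after $i$ contractions is $O(N^{1-i\epsilon/2})$, so after $O(1/\epsilon)$ iterations it falls below $N^\epsilon$ and can be ranked exactly on a single machine. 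The backward sweep in Step~4 performs exactly the same traversals in reverse order, taking the same $O(1/\epsilon)$ rounds, and the final output step is constant-round.

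For communication, I would copy the argument of Lemma~\ref{lem:twocyclecomm}. In round $i$ of the forward phase there are $O(N^{1-i\epsilon/2})$ active elements w.h.p., of which $O(N^{1-(i+1)\epsilon/2})$ are sampled; distributing the samples over the $\Omega(N^{1-\epsilon})$ available machines gives each machine $O(N^{\epsilon/2})$ samples w.h.p. For each sample the traversal length is geometric with success probability $N^{-\epsilon/2}$, so the total number of edges a machine scans is a sum of $O(N^{\epsilon/2})$ independent geometrics with overall mean $O(N^\epsilon)$. A Chernoff bound for sums of geometric random variables, exactly as in Lemma~\ref{lem:twocyclecomm}, yields $O(N^\epsilon)$ queries per machine w.h.p.; this also controls the total work per round and hence the $O(N)$ total-space bound, since the contracted list together with the weights is strictly shrinking. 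The backward sweep performs the same traversals once more and obeys the same tail bound.

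The main technical obstacle is ensuring that the concentration guarantees hold \emph{simultaneously} across all $O(1/\epsilon)$ iterations, all machines, and both the forward and backward phases, while also ruling out degenerate short sub-lists for which the $\Omega(\log N)$ mean assumption breaks down. This is handled by taking the Chernoff constants large enough that a union bound over the $\poly(N)$ relevant events still leaves failure probability $1/N^{\Omega(1)}$, and by noting that any sub-list of length $O(N^\epsilon)$ already satisfies the termination condition and can be routed to a single machine without further sampling. With these pieces in place, the theorem follows by combining the $O(1/\epsilon)$ iteration bound for each phase with the observation that each iteration and the single-machine resolution step cost $O(1)$ rounds of \AMPC{}.
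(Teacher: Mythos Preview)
Your proposal is correct and follows essentially the same approach as the paper: reduce to the \twocycle{} analysis by invoking the list analogues of Lemma~\ref{lem:iteration-shrink}, Lemma~\ref{lem:twocyclerounds}, and Lemma~\ref{lem:twocyclecomm}. The paper's proof is in fact terser than yours---it simply cites those three lemmas without re-deriving anything, and does not separately discuss the backward propagation phase or the short sub-list edge case---so your write-up is, if anything, slightly more thorough.
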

\begin{proof}
A proof identical to that of Lemma~\ref{lem:iteration-shrink} shows
that a list with size $k = \Omega(N^{\epsilon})$ shrinks by a factor
of at least $N^{\epsilon/2}$ after one iteration of the loop in
Step~\ref{lc:contractionloop}. Therefore, $O(1/\epsilon)$ iterations
suffice before the length of the list is $O(N^{\epsilon})$ (the proof
is identical to that of Lemma~\ref{lem:twocyclerounds}). Bounding the
number of queries can be done similar to the proof of
Lemma~\ref{lem:twocyclecomm}, which implies the bound on the total
space.
\end{proof}

Observing that the PRAM construction of
Tarjan~\etal{}~\cite{tarjan1985efficient} can be implemented in \MPC{}
gives the following lemma.
\begin{lemma}\label{lem:etts}
Given a forest on $n$ vertices, $F$, as a collection of undirected
edges, the Euler tours for trees in $F$ can be constructed in
$O(1/\epsilon)$ rounds of \MPC{} using $O(n)$ total space.
\end{lemma}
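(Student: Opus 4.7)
The plan is to implement the classic Tarjan--Vishkin Euler-tour construction using only the standard \MPC{} primitives of sorting, segmented scan, and indexed join on tuples of total size $O(n)$, each of which is known to run in $O(1/\epsilon)$ rounds on machines with $O(n^\epsilon)$ space.

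First I would replace every undirected tree edge $\{u,v\} \in F$ by the two directed edges $(u,v)$ and $(v,u)$, producing at most $2(n-1)$ tuples in total. The Euler tour of each tree is then described by a successor function on these directed edges: for each directed edge $(u,v)$, its successor is $(v,w)$, where $w$ is the neighbor of $v$ immediately following $u$ in a fixed cyclic order on $v$'s adjacency list (e.g., sorted by vertex id). Once this successor function is written down, each tree of $F$ corresponds to exactly one cycle of length $2(|T|-1)$ over these directed edges, which is by definition its Euler tour.

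Computing the successor is a sort-and-scan pipeline. I would first sort the $2(n-1)$ directed edges lexicographically by $(v,u)$ so that all edges entering each vertex $v$ (equivalently, all edges leaving $v$ under reversal) form a contiguous block in sorted order. Within each such block, a segmented scan gives each entry a pointer to the next entry in the block (wrapping around at the end of the block to the first entry); this is exactly the successor $(v,w)$. Producing the final tour as a linked structure keyed by edge identifier then requires one more join, which is again a sort and a linear pass. High-degree vertices are not a problem because the sort-based grouping never forces one machine to hold an entire adjacency list: standard MPC sorting distributes a block of size $k$ across $\lceil k/n^\epsilon \rceil$ machines while still enabling segmented scans in $O(1)$ additional rounds.

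The main obstacle, such as it is, is merely bookkeeping: one must verify that the successor function so defined yields a single cycle per tree, which is the classical correctness argument of the Tarjan--Vishkin construction and does not depend on the computation model. Once that is granted, the complexity follows immediately: a constant number of sorts, scans, and joins on $O(n)$ tuples, each costing $O(1/\epsilon)$ rounds and $O(n)$ total space under the $S = n^\epsilon$ regime, yielding the claimed $O(1/\epsilon)$ rounds and $O(n)$ total space.
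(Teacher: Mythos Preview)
Your proposal is correct and is exactly the approach the paper has in mind: the paper does not give an explicit proof but simply states that ``observing that the PRAM construction of Tarjan~\etal{}~\cite{tarjan1985efficient} can be implemented in \MPC{} gives the following lemma,'' and you have spelled out precisely that implementation via sorting, segmented scans, and joins on $O(n)$ tuples.
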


Now, by applying forest connectivity, followed by the reduction from the
tree rooting problem to list ranking described above we have the
following theorem:
\begin{theorem}\label{thm:treerooting}
Given a forest on $n$ vertices, $F$, as a collection of undirected
edges, the trees in $F$ can be rooted in $O(1/\epsilon)$ rounds of
$\AMPC{}$ w.h.p using $O(n\log n)$ total
space w.h.p.
%  Furthermore, the algorithm uses $O(n^{\epsilon})$
%  queries per machine w.h.p.
\end{theorem}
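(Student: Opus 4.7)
The plan is to execute the Euler-tour-plus-list-ranking reduction sketched just before the theorem, chaining the primitives already developed in the excerpt. First I would invoke Theorem~\ref{thm:forestconn} to compute, in $O(1/\epsilon)$ rounds and $O(n\log n)$ total space w.h.p., the connected component (tree) of every vertex in $F$. One additional \MPC{} aggregation round lets every vertex $v$ learn the identifier of the prescribed root $r$ of the tree containing it, which we will need to orient the list.

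Next, using Lemma~\ref{lem:etts} I would build an Euler tour of each tree in $O(1/\epsilon)$ rounds and $O(n)$ total space, maintaining for every undirected edge $\{u,v\}$ a cross-pointer between its two directed occurrences $(u\to v)$ and $(v\to u)$ in the tour. For each tree I would then break its tour at an occurrence of an edge incident to $r$, turning the cyclic Euler tour into a linear list whose head corresponds to $r$; this step is purely local in the tour representation and costs $O(1)$ rounds.

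Third, I would apply Theorem~\ref{thm:listranking} to the resulting lists to compute, in $O(1/\epsilon)$ rounds and $O(n)$ total space w.h.p., the rank of each directed-edge occurrence relative to its list head. For every tree edge $\{u,v\}$, the cross-pointer lets its two occurrences compare ranks, and we declare the tail of the smaller-rank occurrence the parent, setting $p$ of its head to that tail; the root itself is handled by $p(r)=r$. Correctness follows from the standard observation that in an Euler tour starting at the root, every non-root vertex is entered for the first time along its parent edge, so the parent-to-child occurrence of any edge has strictly smaller rank than the reverse occurrence.

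The only thing to verify is that the composition respects the stated budgets. Each of the three subroutines runs in $O(1/\epsilon)$ rounds using at most $O(n\log n)$ total space w.h.p., so the sequential composition still uses $O(1/\epsilon)$ rounds and $O(n\log n)$ total space, and the high-probability guarantees combine by a union bound over a constant number of invocations. The main (mild) obstacle is simply to convince oneself of the forward/reverse classification via ranks and to check that the cross-pointers survive the list-ranking contraction/expansion phases; both are routine given the structure of Algorithm~\ref{alg:list-ranking} and the definition of the Euler tour.
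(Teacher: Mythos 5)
Your proposal matches the paper's own (largely implicit) argument: compute forest connectivity to get $O(n\log n)$ space w.h.p., build the Euler tour via Lemma~\ref{lem:etts}, break it at an edge incident to the root, apply Theorem~\ref{thm:listranking}, and classify each tree edge's two occurrences as forward/reverse by comparing ranks through the cross-pointers. The round and space accounting you give is exactly how the paper composes these primitives, so the proof is correct and takes essentially the same route.
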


Note that list-ranking allows us to efficiently implement algorithms
based on the Euler tour of a tree.  The idea is to first compute an
Euler tour and apply list ranking from the root, which gives the
distance from each edge to the root. After list ranking, the Euler
tour can be represented as a sequence which we refer to as the
\emph{Euler sequence}, since for each edge its rank its position in an
array representing the tour starting from the root.  Computing
properties, such as subtree sizes, or preorder numberings follow
naturally from this representation of a tree.

\myparagraph{Subtree Sizes.}
Given a rooted tree, $T$, we show how to compute the subtree sizes for
each vertex in the tree. We first compute the Euler sequence of $T$,
starting at the root. Next we assign a weight of $1$ to each forward
edge and a weight of $0$ to each reverse edge (see the discussion on
rooting trees above). Computing a prefix-sum over the Euler sequence
with respect to the weights computes, for each edge, the number of
forward edges preceding it. The subtree size at a given node, $v$, can
be obtained by taking the difference of the values for the last
reverse edge entering $v$ and the first forward edge exiting $v$.
Therefore, we have the following lemma:
\begin{lemma}\label{lem:subtreesizes}
Computing the subtree sizes of a rooted tree can be performed in
$O(1/\epsilon)$ rounds of $\AMPC{}$ w.h.p. using $O(n)$ total space
w.h.p.
%Furthermore, the algorithm
%uses $O(n^{\epsilon})$ queries per machine w.h.p.
\end{lemma}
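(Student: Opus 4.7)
The plan is to implement the classical Euler-tour-plus-prefix-sum algorithm in \AMPC{} by composing the primitives already established in the paper. Since the input tree is given as rooted, each edge has a well-defined orientation (forward, from parent to child, or reverse, from child to parent) that can be classified in $O(1)$ rounds with a single DDS lookup per edge.

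Concretely, I would first invoke Lemma~\ref{lem:etts} to construct the Euler tour of $T$ in $O(1/\epsilon)$ rounds and $O(n)$ total space; the tour is a list of $2(n-1)$ oriented arcs. Next, I would apply Theorem~\ref{thm:listranking} to list-rank this tour, which in $O(1/\epsilon)$ rounds and $O(n)$ total space assigns a position $\mathrm{pos}(e)$ in $\{0,1,\dots,2(n-1)-1\}$ to every arc $e$. With positions in hand, I would assign weight $1$ to each forward arc and $0$ to each reverse arc and compute a prefix sum $\sigma(e)$ of these weights along the sequence; once the tour is indexed by position in the DDS, prefix sum is a standard \MPC{} primitive that runs in $O(1)$ rounds using $O(n)$ total space. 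Equivalently, one can run weighted list ranking directly with the $0/1$ weights instead of the uniform weights used in Algorithm~\ref{alg:list-ranking}, which makes no difference to the analysis.

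Finally, for each non-root vertex $v$ let $e^+_v = (p(v),v)$ and $e^-_v = (v,p(v))$ be the forward and reverse arcs between $v$ and its parent; these are precisely the first and last occurrences of $v$ in the Euler sequence, and each vertex can locate them in $O(1)$ \AMPC{} rounds by a constant number of DDS lookups. The subtree rooted at $v$ corresponds exactly to the forward arcs whose positions lie strictly between $\mathrm{pos}(e^+_v)$ and $\mathrm{pos}(e^-_v)$, so its size equals $\sigma(e^-_v) - \sigma(e^+_v) + 1$; the root's subtree size is simply $n$.

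There is no substantive obstacle beyond assembling the pieces: the dominant cost is the list-ranking step, whose $O(1/\epsilon)$-round and $O(n)$-space guarantees are already provided by Theorem~\ref{thm:listranking}, and every other step is a constant-round \AMPC{}/\MPC{} primitive (orientation classification, sorting by position, prefix sum, constant-many lookups per vertex) that runs within $O(n)$ total space. The one detail worth checking carefully is that the forward/reverse pair $(e^+_v,e^-_v)$ is uniquely determined and locatable for every non-root $v$, which follows directly from the structure of a tree Euler tour rooted at a fixed vertex.
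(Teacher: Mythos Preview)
Your proposal is correct and follows essentially the same approach as the paper: build the Euler sequence, assign weight $1$ to forward arcs and $0$ to reverse arcs, take a prefix sum, and read off each subtree size as a difference of two prefix values. The only cosmetic difference is that you index the two endpoints by the parent--child arcs $e^+_v,e^-_v$, whereas the paper phrases it as the first forward edge exiting $v$ and the last reverse edge entering $v$; your formulation has the minor advantage of handling leaves uniformly.
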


\myparagraph{Preorder Numbering.}
Computing a preorder numbering of a rooted tree can also be done using
the Euler sequence and prefix-sums. The algorithm stores a weight of
$1$ for each forward edge and a weight of $0$ for each reverse edge,
and compute the prefix-sum over the weighted Euler sequence with
respect to the weights. The preorder number of a vertex is simply the
number of forward edges preceding it.  Therefore, we have the
following lemma:
\begin{lemma}\label{lem:preordernumbering}
Computing the preorder numbering of a rooted tree can be performed in
$O(1/\epsilon)$ rounds of $\AMPC{}$ w.h.p. using $O(n)$ total space
w.h.p.
% Furthermore, the algorithm
% uses $O(n^{\epsilon})$ queries per machine w.h.p.
\end{lemma}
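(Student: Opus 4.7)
The plan is to reduce the preorder numbering problem to an instance of prefix sums on the Euler sequence, following the same template used for subtree sizes in Lemma~\ref{lem:subtreesizes}. First, I would invoke Lemma~\ref{lem:etts} to build the Euler tour of the input rooted tree in $O(1/\epsilon)$ rounds using $O(n)$ total space, and then apply Theorem~\ref{thm:listranking} to rank the tour starting from the root, producing the Euler sequence with explicit positions for every edge occurrence. During this stage, using the cross-pointers maintained between the two occurrences of each tree edge (as described in the setup for Algorithm~\ref{alg:list-ranking}), I would classify each edge occurrence as \emph{forward} or \emph{reverse} by comparing its rank to that of its twin: the occurrence with the smaller rank is forward.

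Next I would attach a weight of $1$ to each forward occurrence and $0$ to each reverse occurrence, and run a prefix-sum over the Euler sequence with respect to these weights. The prefix sum up to position $i$ counts the number of forward edges that precede position $i$ in the tour, which is exactly the number of vertices visited for the first time strictly before the endpoint at position $i$. Hence, for any non-root vertex $v$, if $e_v$ denotes the first forward occurrence of the edge entering $v$ in the Euler sequence, the prefix-sum value at $e_v$ is the preorder number of $v$; the root receives preorder number $0$ (or $1$, depending on convention). Reading off these values assigns each vertex its preorder number in $O(1)$ additional rounds.

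For the resource bounds, each of the three ingredients (Euler tour construction, list ranking, prefix sums) runs in $O(1/\epsilon)$ rounds of \AMPC{} using $O(n)$ total space w.h.p.: the first two are exactly Lemmas~\ref{lem:etts} and Theorem~\ref{thm:listranking}, while the prefix-sum step is a standard \MPC{} primitive that sorts and scans in $O(1/\epsilon)$ rounds on $O(n)$ data with $O(n^{\epsilon})$ space per machine. Summing the costs gives the claimed $O(1/\epsilon)$ rounds and $O(n)$ total space bounds.

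The only subtlety I would be careful about is correctly identifying forward versus reverse occurrences and mapping the prefix-sum values back to vertices; once the list ranking gives us global positions in the Euler sequence and the cross-pointers let us compare the two occurrences of each edge, this is purely local bookkeeping. Consequently there is no genuinely new algorithmic obstacle beyond what is already handled by the tree rooting and list ranking machinery established earlier in the section.
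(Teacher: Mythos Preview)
Your proposal is correct and follows essentially the same approach as the paper: build the Euler sequence, assign weight $1$ to forward edges and $0$ to reverse edges, compute a prefix sum, and read off the preorder number of each vertex as the number of forward edges preceding it. The paper's own justification is just a one-paragraph sketch of exactly this reduction, so your write-up is in fact more detailed than what appears there.
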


\myparagraph{Subtree Minimum and Maximum.}
Note that the Euler tour based technique described above to compute
subtree sizes is not sufficient to compute the minimum and maximum
over subtrees quickly (the technique depends on the function having an
inverse, which precludes using it for $\min, \max$). The classic
solution to this problem is the range minimum query (RMQ) data
structure. The data structure is built over an input array of $n$
elements. Queries take two indices, $i, j$ and return the minimum
value element in this range. Observe that after computing the Euler
sequence, computing the minimum and maximum of a vertex $v$'s subtree
can be reduced to a RMQ on the interval in the Euler tour representing
$v$'s subtree. Using the fact that RMQ can be implemented efficiently
in \MPC{} (e.g., see~\cite{andoniparallel}) gives us the following lemma:
\begin{lemma}\label{lem:subtreeminmax}
A data structure to compute the minimum and maximum of a subtree can
be computed in $O(1/\epsilon)$ rounds of \AMPC{}. Each query to the data structure can be answered in $O(1)$ rounds.
Preprocessing and storing the data structure uses $O(n)$ total space w.h.p.
%Furthermore, the algorithm uses
%$O(n^{\epsilon})$ queries per machine.
\end{lemma}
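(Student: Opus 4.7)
The plan is to reduce the subtree min/max problem to a range minimum query (RMQ) problem over a linearization of the tree, and then invoke an existing \MPC{} RMQ data structure. First I would compute the Euler tour of the rooted tree using Lemma~\ref{lem:etts} and then apply list ranking (Theorem~\ref{thm:listranking}) to obtain the Euler sequence, so that every edge occurrence has an explicit position relative to the root. As already noted in the discussion preceding Lemma~\ref{lem:subtreesizes}, each vertex $v$ then corresponds to a contiguous interval $[\ell_v, r_v]$ in the Euler sequence, delimited by the first forward edge entering $v$'s subtree and the last reverse edge leaving it; the set of vertices appearing in this interval is exactly the subtree rooted at $v$.

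Next, I would construct an array $A$ indexed by positions in the Euler sequence, where $A[i]$ stores the value (weight, label, etc.) of the vertex associated with the $i$-th edge occurrence, breaking ties in a canonical way so that every vertex of the subtree contributes at least one entry inside $[\ell_v, r_v]$. Under this encoding, the minimum (respectively maximum) over $v$'s subtree is exactly the minimum (respectively maximum) of $A$ over the range $[\ell_v, r_v]$, so the problem reduces to building an RMQ structure on $A$ and answering a single range query per vertex.

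Finally, I would build an RMQ data structure over $A$ using the \MPC{} construction noted in~\cite{andoniparallel}. Since any \MPC{} algorithm is simulable by \AMPC{} with at most a constant factor overhead (as observed in Section~\ref{sec:model}), this step runs in $O(1/\epsilon)$ rounds and uses $O(n)$ total space, and each range query is answered in $O(1)$ rounds. Combining preprocessing costs — Euler tour construction, list ranking, and RMQ construction, each taking $O(1/\epsilon)$ rounds with $O(n)$ total space w.h.p.\ — yields the overall preprocessing bound, and looking up $(\ell_v, r_v)$ for a queried vertex $v$ followed by one RMQ call gives the $O(1)$-round query bound.

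The main obstacle is really a bookkeeping one: making sure that the mapping between vertices and intervals $[\ell_v, r_v]$ is computed correctly, and that $A$ is populated so that min/max over the interval returns the subtree min/max rather than, say, a repeated copy of $v$'s own value that masks deeper vertices. Once these indices are recorded alongside the list-ranking output, and the RMQ structure from~\cite{andoniparallel} is imported, the round and space bounds follow immediately from the cited result and from Theorem~\ref{thm:listranking}.
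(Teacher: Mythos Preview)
Your proposal is correct and follows essentially the same approach as the paper: compute the Euler sequence via Euler tour plus list ranking, reduce subtree min/max to a range minimum query over the interval representing $v$'s subtree, and invoke the \MPC{} RMQ construction of~\cite{andoniparallel}. The paper's argument is just the paragraph immediately preceding the lemma, and you have simply fleshed out the bookkeeping details (the interval endpoints $[\ell_v, r_v]$ and the array $A$) that the paper leaves implicit.
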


\section{2-Edge Connectivity}\label{sec:biconnectivity}

In this section we show how our connectivity, spanning-forest
and tree-based algorithms can be combined to obtain a 2-edge
connectivity algorithm. A \emph{biconnected component} of an
undirected graph $G$ is a maximal subgraph such that it remains
connected even after removing any single vertex from it. A connected
graph can be decomposed into a block-cut tree, which contains
biconnected components, joined by vertices called \emph{articulation
  points}. A graph is \emph{$k$-edge connected} if it remains
connected under the removal of any $k-1$ edges. A \emph{bridge} is an
edge whose removal increases the number of connected components of
$G$. In this section we design a data structure that enables us

Sequentially, both problems can be solved using the Hopcroft-Tarjan
algorithm~\cite{hopcroft1973algorithm}. The algorithm uses depth-first
search (DFS) to identify the articulation points and bridges in $O(m +
n)$ time. Although this sequential algorithm can be parallelized using
parallel algorithms for DFS~\cite{aggarwal1989parallel}, the resulting
algorithm requires many processors and is unlikely to lead to an
efficient \MPC{} or \AMPC{} implementation. In the PRAM setting,
Tarjan and Vishkin present the first work-efficient algorithm for
biconnectivity and bridge-finding~\cite{tarjan1985efficient} (see also
Maon~\etal{}.~\cite{maon1986parallel} and
Ramachandran~\etal{}~\cite{ramachandran1992parallel} for
biconnectivity algorithms based on computing open ear decompositions).

\newcommand{\codevar}[1]{\mathit{#1}}

\begin{tboxalg}{\biconnectivity{}(G=(V, E))}\label{alg:biconnectivity}
	\begin{enumerate}[ref={\arabic*}, topsep=0pt,itemsep=0ex,partopsep=0ex,parsep=1ex, leftmargin=*]
    \item Let $(E_{F}, M) = \spanningforest{}(G)$. $E_{F}$ is the set
      of spanning forest edges and $M$ is the connectivity labeling of
      $G$. \label{bc:spanningforest}

    \item Let $F = \rootforest{}(E_{F})$ and $PN =
      \preordernumber{}(F)$. \label{bc:rootpreorder}

    \item For each $v \in V$, compute $\codevar{Low}(v)$ and
      $\codevar{High}(v)$ and $\codevar{Size}(v)$ where
      $\codevar{Low}(v)$ and $\codevar{High}(v)$ are the minimum and
      maximum preorder numbers of all non-tree $(u,w)$ edges where $u$
      is in $v$'s subtree, and where $\codevar{Size}(v)$ is the size
      of $v$'s subtree.\label{bc:lowhighsize}

    \item $\codevar{critical}$ = $\{(u, p(u)) \in F$ s.t.
      $(u, p(u))$ satisfies Equation~(\ref{eqn:bccheck})$\}$.
      \label{bc:critical}

    \item $L$ = \connectivity{}($G(V, E \setminus
      \codevar{critical})$).\label{bc:bridgeconn}

    \item Output $(L, F)$.
	\end{enumerate}
\end{tboxalg}

Algorithm~\ref{alg:biconnectivity} describes our \AMPC{}
implementation of the Tarjan-Vishkin
algorithm~\cite{tarjan1985efficient}. The algorithm first computes a
spanning forest of $G$ where the trees in the forest can be rooted
arbitrarily (Step~\ref{bc:spanningforest}). Next, it computes a rooted
forest $F$ by rooting each tree in the forest at an arbitrary vertex
and computes $PN$, a preorder numbering of each tree
(Step~\ref{bc:rootpreorder}). It then computes for each $v \in V$
three quantities: $\codevar{Low}(v), \codevar{High}(v),
\codevar{Size}(v)$ which are the minimum and
maximum preorder numbers respectively of all non-tree $(u, w)$ edges
where $u$ is a vertex in $v$'s subtree, and the size of each vertex's
subtree. The initial values for $\codevar{Low}(v), \codevar{High}(v)$,
and $\codevar{Size}(v)$ are $PN(v)$, $PN(v)$ and $0$ respectively.

The critical edges of $G$ are tree edges $(v, p(v)) \in F$ where
\begin{equation}\label{eqn:bccheck}
  \codevar{PN}(p(v)) \leq \codevar{Low}(v) \text{ and } \codevar{High}(v) \leq \codevar{PN}(p(v)) + \codevar{Size}(v)
\end{equation}
The algorithm computes a labeling that can then be used to retrieve
the bridges and articulation points in $O(1)$ queries as follows.
We first delete the critical edges from the graph and compute the
connectivity of this modified graph (Step~\ref{bc:critical}).
We follow Ben-David \etal{} and refer to this pair $(L, F)$ as the
\emph{BC-labeling} of the graph~\cite{bendavid2017implicit}.

%The output of $2$-edge connectivity is the connectivity labeling of
%this graph $L$., and the output of the biconnectivity algorithm is
%these connectivity labels and the rooted forest $F$.
%\he{\textbf{I don't think this is true. Sufficient, but not necessary.}}

Given the BC-labeling $(L, F)$, we can identify the bridges and
articulation points as follows.
The \emph{head} of a component is the parent of the component in the
spanning tree (the root is always a head). Each head defines a
distinct biconnected component (there are at most $n$ heads, since
there can be at most $n$ biconnected components).
A non-root vertex $v$ is an articulation point if it is the head of at
least one component. The root of the forest is an articulation point
if it is the head of two or more components, since these components
were not connected by a non-tree edge in the graph with critical edges
removed.
A tree edge $(u, p(u))$ is a bridge if $u$'s component in $L$ only
contains $u$.
Using this information we can compute 2-edge connected
components of the graph by removing all bridges and running
connectivity again.

Based on the description above, we have the following lemma about the
correctness and cost of our BC-labeling algorithm:
\begin{lemma}\label{lem:bclabelingcost}
  Given an undirected graph $G$, BC-labeling can be computed in
  $O(\log\log_{T/n} n)$ rounds of \AMPC{} w.h.p. using $O(T)$ total
  space w.h.p. where the total space $T = \Omega(m + n)$.
%  Furthermore, each
%  machine performs $O(n^{\epsilon})$ communication per round.
\end{lemma}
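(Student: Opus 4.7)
The plan is to simply analyze Algorithm~\ref{alg:biconnectivity} step by step, noting that the correctness is already established by the Tarjan--Vishkin framework~\cite{tarjan1985efficient} (together with the BC-labeling formulation of Ben-David \etal{}~\cite{bendavid2017implicit}), so the real task is to account for the rounds and total space of each subroutine and verify that the discussion preceding the lemma indeed yields the claimed labeling.

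First I would handle Step~\ref{bc:spanningforest}. By Corollary~\ref{cor:spanningforest-rounds-queries}, $\spanningforest{}(G)$ returns $E_F$ and the component labeling $M$ in $O(\log\log_{T/n} n)$ rounds using $O(T)$ total space w.h.p.; this is where the round bound of the whole procedure is attained. Next, Step~\ref{bc:rootpreorder} applies \rootforest{} and \preordernumber{} to $E_F$; by Theorem~\ref{thm:treerooting} and Lemma~\ref{lem:preordernumbering} each of these runs in $O(1/\epsilon)$ rounds using $O(n\log n)$ total space w.h.p., both of which are absorbed into $O(\log\log_{T/n} n)$ and $O(T)$ respectively, since $T = \Omega(m+n) = \Omega(n)$ and any extra polylogarithmic factor can be charged to the total-space slack $O(n\log n)$ allowed by our assumption.

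For Step~\ref{bc:lowhighsize}, I would observe that $\codevar{Low}(v)$, $\codevar{High}(v)$, and $\codevar{Size}(v)$ are all subtree aggregates over $F$ with the preorder numbering already in hand. $\codevar{Size}(v)$ is covered directly by Lemma~\ref{lem:subtreesizes}. For $\codevar{Low}$ and $\codevar{High}$, I would first push each non-tree edge $(u,w) \in E \setminus E_F$ to its lower endpoint in preorder, setting a per-vertex initial value to the min/max preorder number of incident non-tree edges; then the subtree-min/max data structure of Lemma~\ref{lem:subtreeminmax} gives the two quantities in $O(1/\epsilon)$ rounds of \AMPC{} and $O(n)$ additional total space w.h.p. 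Step~\ref{bc:critical} is then trivial: each tree edge $(v, p(v))$ is handled by a single machine which retrieves $\codevar{PN}(p(v))$, $\codevar{Low}(v)$, $\codevar{High}(v)$, $\codevar{Size}(v)$ from the DDS and evaluates~(\ref{eqn:bccheck}) in $O(1)$ rounds.

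Finally, Step~\ref{bc:bridgeconn} runs \connectivity{} on $G$ with the critical edges removed; by Theorem~\ref{thm:connectivity-rounds-queries} this again costs $O(\log\log_{T/n} n + 1/\epsilon) = O(\log\log_{T/n} n)$ rounds and $O(T)$ total space w.h.p. Summing: the round complexity is dominated by the two connectivity-style invocations, giving $O(\log\log_{T/n} n)$ rounds, and the total space over all subroutines is $O(T)$ w.h.p. The only mildly delicate step, and the one I would write most carefully, is aggregating non-tree edges for the Low/High computation: I need to make sure each non-tree edge is attributed to the right endpoint so that the subtree min/max correctly captures \emph{all} non-tree edges with at least one endpoint in the subtree, which follows from the standard property of preorder numberings that an edge $(u,w)$ with $\codevar{PN}(u) < \codevar{PN}(w)$ has both endpoints in the subtree of the lower common ancestor, so placing the edge at the endpoint of smaller preorder number suffices.
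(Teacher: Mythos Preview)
Your proposal is essentially the same as the paper's proof: correctness deferred to Tarjan--Vishkin and Ben-David \etal{}, then a step-by-step cost accounting invoking Corollary~\ref{cor:spanningforest-rounds-queries}, Theorem~\ref{thm:treerooting}, Lemma~\ref{lem:preordernumbering}, Lemma~\ref{lem:subtreesizes}, Lemma~\ref{lem:subtreeminmax}, and Theorem~\ref{thm:connectivity-rounds-queries}. You are in fact more thorough than the paper, which stops after Step~\ref{bc:lowhighsize} and does not explicitly account for Steps~\ref{bc:critical} and~\ref{bc:bridgeconn}.

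One small correction to the detail you flagged as delicate. Pushing a non-tree edge $(u,w)$ only to its endpoint of smaller preorder number does \emph{not} suffice: with an arbitrary (non-DFS) spanning tree, a non-tree edge can be a cross edge, so it is possible that $w\in\mathrm{subtree}(v)$ while $u\notin\mathrm{subtree}(v)$ even when $\codevar{PN}(u)<\codevar{PN}(w)$, and then your scheme would miss the contribution $\codevar{PN}(u)$ to $\codevar{Low}(v)$. The standard fix is the obvious one: at $u$ record $\codevar{PN}(w)$ and at $w$ record $\codevar{PN}(u)$, then take the subtree min/max via Lemma~\ref{lem:subtreeminmax}. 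This does not change the round or space bounds, and the paper's proof simply cites Lemma~\ref{lem:subtreeminmax} without spelling out the per-vertex initialization at all.
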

\begin{proof}
The correctness of this algorithm follows from the
proofs in Tarjan \etal{}~\cite{tarjan1985efficient}, and Ben-David
\etal{}~\cite{bendavid2017implicit}.

The cost is proved by analyzing the steps of the algorithm. The
spanning forest computation in Step~\ref{bc:spanningforest} runs in
$O(\log\log_{T/n} n)$ rounds of \AMPC{} w.h.p. by
Corollary~\ref{cor:spanningforest-rounds-queries}. Step~\ref{bc:rootpreorder} runs in
$O(1/\epsilon)$ rounds of \AMPC{} w.h.p. and uses $O(n\log n)$ space
w.h.p.by Theorem~\ref{thm:treerooting} and
Lemma~\ref{lem:preordernumbering}. Note that the space of the
tree rooting algorithm can be reduced to $O(n)$ by applying the
general graph connectivity algorithm
(Algorithm~\ref{alg:connectivity}) instead of the forest connectivity
algorithm, at the expense of $O(\log\log_{T}/n)$ rounds.
Step~\ref{bc:lowhighsize}, which computes $\codevar{Low},
\codevar{High}$, and $\codevar{Size}$ run in $O(1/\epsilon)$ rounds
due to Lemmas~\ref{lem:subtreesizes} and~\ref{lem:subtreeminmax}.
%Identifying the bridges in Step~\ref{bc:bridges} runs in one round
%where each tree edge makes a constant
%number of queries.  Therefore this step runs in one round and each
%machine performs $O(n^{\epsilon})$ queries.
%Finally, Step~\ref{bc:bridgeconn} costs $O(\log\log_{T/n} n)$ rounds
%of \AMPC{} w.h.p. by Theorem~\ref{thm:connectivity-rounds}. Summing
%over all steps, the total round-complexity is $O(\log\log_{T/n} n)$
%with high probability. Moreover. note that all steps above use $O(n^{\epsilon})$ queries per
%machine in each round w.h.p., as desired.
\end{proof}

Since we can use the BC-labeling to compute 2-edge connected
components of the graph in the same round-complexity, we have the
following theorem:
\begin{theorem}\label{thm:biconnectivity}
2-edge connectivity can be computed in $O(\log\log_{T/n} n)$ rounds of
\AMPC{} w.h.p. where the total space $T = \Omega(m + n)$.
%Furthermore, each machine performs $O(n^{\epsilon})$
%communication per round.
\end{theorem}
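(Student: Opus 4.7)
The plan is to bootstrap Theorem~\ref{thm:biconnectivity} directly from Lemma~\ref{lem:bclabelingcost} together with the connectivity algorithm of Section~\ref{sec:connectivity}. First I would invoke Lemma~\ref{lem:bclabelingcost} to compute the BC-labeling $(L, F)$ of the input graph $G$ in $O(\log\log_{T/n} n)$ rounds of \AMPC{} using $O(T)$ total space. This gives both the spanning forest $F$ and the component labeling $L$ of the graph with all critical edges removed.

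Next I would use $(L, F)$ to extract the set of bridges. By the characterization given right after Lemma~\ref{lem:bclabelingcost}, a tree edge $(u, p(u)) \in F$ is a bridge exactly when the $L$-component of $u$ is the singleton $\{u\}$. To check this in $O(1)$ rounds I would compute the size of each $L$-component by writing, for each vertex $v$, a key-value pair keyed by its component id $L(v)$ into the DDS, sorting/aggregating by key (a standard \MPC{}-simulable primitive using $O(n)$ space), and then letting each tree edge $(u, p(u))$ query for the size of $L(u)$ in the following round. This yields the set $B \subseteq E$ of bridges using $O(n)$ writes and $O(n)$ queries, with $O(n^\epsilon)$ communication per machine.

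Having identified $B$, the 2-edge connected components of $G$ are by definition the connected components of $G' = (V, E \setminus B)$. I would therefore invoke Algorithm~\ref{alg:connectivity} on $G'$; by Theorem~\ref{thm:connectivity-rounds-queries} this runs in $O(\log\log_{T/n} n + 1/\epsilon)$ rounds w.h.p.\ using $O(T)$ total space, which is subsumed by the $O(\log\log_{T/n} n)$ bound since $\epsilon$ is a constant. Summing the three phases keeps the round complexity at $O(\log\log_{T/n} n)$ w.h.p.

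The only delicate point — and the main obstacle — is ensuring that the intermediate bookkeeping (building $G'$ by removing bridges, then handing it to the connectivity subroutine) respects the per-machine space and query bounds and does not blow up by more than a constant factor relative to the BC-labeling phase. Since $|B| \le n - 1$, writing $G'$ to a fresh DDS and randomly redistributing its edges can be done with $O(m+n)$ total writes, and the resulting input satisfies the same total-space hypothesis $T = \Omega(m+n)$ required by Theorem~\ref{thm:connectivity-rounds-queries}. Correctness follows from the correctness of the BC-labeling (Lemma~\ref{lem:bclabelingcost}, via~\cite{tarjan1985efficient,bendavid2017implicit}) together with the defining property that two vertices lie in the same 2-edge connected component iff they remain connected after every bridge is deleted.
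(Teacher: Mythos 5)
Your proposal is correct and follows essentially the same route as the paper: compute the BC-labeling via Lemma~\ref{lem:bclabelingcost}, identify bridges as tree edges $(u,p(u))$ whose $L$-component is the singleton $\{u\}$, delete them, and rerun the connectivity algorithm, all within $O(\log\log_{T/n} n)$ rounds. The paper states this in a single sentence, whereas you spell out the bookkeeping (component-size aggregation and rebuilding $G'$) explicitly, which is a faithful elaboration rather than a different argument.
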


\section{Conclusion and Open Problems}\label{sec:conclusion}

In this work, we introduced and justified the Adaptive Massively Parallel
Computation model, an extension of the Massively
Parallel Computation model. Furthermore, we presented several new
graph algorithms with much lower round complexities compared to the
best known algorithms in the \MPC{} model, such as a constant round
algorithm to find a maximal independent set and $O(\log \log_{m/n} n)$
round algorithms to find connectivity/minimum spanning tree among
others problems.
As future research, it is interesting to design algorithms for other
problems such as vertex coloring, edge coloring and maximal matching
in the \AMPC{} model. Another open challenge is to develop
(conditional or unconditional) hardness results in this model.

\bibliographystyle{plain}
\bibliography{references}

%\appendix

%\input{proofs}
%\input{forest-connectivity}
%\input{tree-operations}
%\input{biconnectivity}
%\input{implementations}
%\input{smalltotalspace}

\end{document}